\documentclass[conference]{IEEEtran}
\usepackage{graphicx}
\usepackage{amsthm}
\usepackage{epsfig}
\usepackage{latexsym}
\usepackage{amsfonts}
\usepackage{here}
\usepackage{rawfonts}
\usepackage[latin1]{inputenc}
\usepackage[T1]{fontenc}
\usepackage{calc}
\usepackage{capitalgreekitalic}
\usepackage{url}
\usepackage{enumerate}
\usepackage{color}
\usepackage[tbtags]{amsmath}
\usepackage{amssymb}
\usepackage{upref}
\usepackage{epic,eepic}
\usepackage{times}
\usepackage{dsfont}
\usepackage{comment}
\usepackage{cite}
\usepackage{bbm}

\renewcommand{\Pr}{\ensuremath{\operatorname{Pr}}}


\newtheorem{theorem}{\bf Theorem}

\newtheorem{definition}{\bf Definition}

\usepackage{dsfont}

\newcounter{step}
\newlength{\totlinewidth}
\newenvironment{algorithm}{%
  \rule{\linewidth}{1pt}
  \begin{list}{}%
    {\usecounter{step}%
      \settowidth{\labelwidth}{\textbf{Step 2:}}%
      \setlength{\leftmargin}{\labelwidth}%
      \setlength{\topsep}{-2pt}%
      \addtolength{\leftmargin}{\labelsep}%
      \addtolength{\leftmargin}{2mm}%
      \setlength{\rightmargin}{2mm}%
      \setlength{\totlinewidth}{\linewidth}%
      \addtolength{\totlinewidth}{\leftmargin}%
      \addtolength{\totlinewidth}{\rightmargin}%
      \setlength{\parsep}{0mm}%
      \raggedright}}%
  {\end{list}%
  \rule{\linewidth}{1pt}}
\newcounter{substep}

  {\end{list}}

\newlength{\aligntop}
\setlength{\aligntop}{-0.53em}
\newlength{\alignbot}
\setlength{\alignbot}{-0.85\baselineskip}
\addtolength{\alignbot}{-0.1em} \makeatletter
 \makeatother

\IEEEoverridecommandlockouts

\usepackage{algorithm}
\usepackage{algorithmic}

\usepackage{subfigure}


\begin{document}
\title{\huge Dynamic Psychological Game Theory for Secure Internet of Battlefield Things (IoBT) Systems}

\author{
\IEEEauthorblockN{Ye Hu,  Anibal Sanjab, and Walid Saad}  
\IEEEauthorblockA{\small Wireless@VT, Bradley Department of Electrical and Computer Engineering, Virginia Tech, Blacksburg, VA, USA. \\
Emails: yeh17@vt.edu, anibals@vt.edu and walids@vt.edu.}\vspace{-0.9cm}
\thanks{This research was sponsored in part by the U.S. National Science Foundation under Grant CNS-1446621 and, in part, by the Army Research Laboratory and was accomplished under Grant Number W911NF-17-1-0021. The views and conclusions contained in this document are those of the authors and should not be interpreted as representing the official policies, either expressed or implied, of the Army Research Laboratory or the U.S. Government. The U.S. Government is authorized to reproduce and distribute reprints for Government purposes notwithstanding any copyright notation herein.
}}

\maketitle

%

%

\vspace{0cm}
\begin{abstract}
In this paper, a novel anti-jamming mechanism is proposed to analyze and enhance the security of adversarial Internet of Battlefield Things (IoBT) systems. 
In particular, the problem is formulated as a dynamic psychological game between a soldier and an attacker. In this game, the soldier seeks to accomplish a time-critical mission by traversing a battlefield within a  certain amount of time, while maintaining its connectivity with an IoBT network. The attacker, on the other hand, seeks to find the optimal opportunity to compromise the IoBT network and maximize the delay of the soldier's IoBT transmission link. The soldier and the attacker's psychological behavior are captured using tools from psychological game theory, with which the soldier's and attacker's intentions to harm one another are considered in their utilities. 
To solve this game, a novel learning algorithm based on Bayesian updating is proposed to find a $\epsilon$-like psychological self-confirming equilibrium of the game. Simulation results show that, based on the error-free beliefs on the attacker's psychological strategies and beliefs, the soldier's material payoff can be improved by up to 15.11\% compared to a conventional dynamic game without psychological considerations.


\end{abstract}
%
%

\section{Introduction}

Emerging Internet of Things (IoT) technologies have led to significant changes in how autonomous systems are managed \cite{suri2016analyzing}. In a military environment, IoT technologies provide new ways for managing and operating a battlefield by interconnecting combat equipment, soldier devices, and other battlefield resources\cite{tortonesi2016leveraging}. This integration of the IoT with military networks is referred to as the \emph{Internet of Battlefield Things (IoBT)}\cite{suri2016analyzing}. In an IoBT, the connectivity between the wearables carried by the soldiers and other IoBT devices, such as multipurpose sensors, autonomous vehicles, and drones, plays a significant role in the mission-critical battlefield operations \cite{ray2015towards}.  However, the connectivity between these devices is highly vulnerable to cyber attacks, given the the adversarial nature of the battlefield coupled with the limitations of the IoBT devices' security mechanisms \cite{abuzainab2017dynamic}. Moreover, in an adversarial battlefield environment, the psychology of the soldiers and attackers could significantly influence their behavior, and, subsequently influence the security of the IoBT network. 


\subsection{Related Works}
The existing literature has studied a number of problems related to the security of the IoBT\cite{tortonesi2016leveraging,ray2015towards,abuzainab2017dynamic,Nof2018misinfo}. In \cite{tortonesi2016leveraging}, the communications and information management challenges of the IoBT are investigated. The work in\cite{ray2015towards} integrates IoT and network centric warfare for the enhancement of the IoBT integrity. The authors in \cite{abuzainab2017dynamic} use a feedback Stackelberg solution to dynamically optimize the connectivity of an adversarial IoBT network. The work in \cite{Nof2018misinfo} develops a mean-field game approach to analyze the spread of misinformation in an adversarial IoBT. Despite the promising results, these existing works\cite{tortonesi2016leveraging,ray2015towards,abuzainab2017dynamic} mostly rely on static constructs and do not consider the influence of the human players' psychology and potential bounded rationality when making decisions or choosing strategies within an IoBT setting. Indeed, the behavioral aspect of human decision making processes, leading agents to deviate from the fully rational objective behavior in an IoBT, has a direct impact on the security of the IoBT network. Hence, this aspect must be accounted for and thoroughly studied within the context of studying and assessing the security of the IoBT.


Recently, there has been significant interest in studying human behavior and its cyber-psychical security impact. The authors in \cite{hota2016fragility} study a common-pool resource game that captures the players' risk preference using tools from prospect theory.  The work in \cite{sanjab2017prospect} uses prospect theory to study the effect of a defender's and attacker's subjective behavior on the security of a drone delivery system. The work in \cite{xiao2017cloud} uses prospect theory to analyze the interaction between the defender of a cloud storage system and an attacker targeting the system with advanced persistent threats. In \cite{sanjab2016bounded}, a cognitive hierarchy theory based approach is proposed to capture the bounded rationality of defenders and attackers in  cyber-physical systems. These previous works present interesting and novel results. However, the existing literature has not yet considered and analyzed the influence of players' psychology on the game-theoretic decision making in IoT networks.  In fact, recent works in the game theory literature have shown that decision making is strongly impacted by human psychology  and have studied various games' aspects and solutions while accounting for psychological factors\cite{geanakoplos1989psychological,battigalli2009dynamic, battigalli2015frustration,rossi2017much}. In this regard, the work in \cite{geanakoplos1989psychological} proves the existence of sub-game perfect and sequential equilibria in psychological games. The authors in \cite{battigalli2009dynamic} study a game-theoretic model that captures dynamic psychological effects and develops new psychological game solution concepts. The work in \cite{battigalli2015frustration} considers the behavioral consequences of psychology in presence of blaming behaviors. In addition, the effect of the human psychology in mean-field-type games is studied in \cite{rossi2017much}. Despite the promising results, these existing works on psychological game theory and its applications\cite{geanakoplos1989psychological, battigalli2009dynamic, battigalli2015frustration,rossi2017much} have not analyzed the potential adoption of psychological game approaches in security scenarios. In \cite {ye2018psychology}, we studied how a soldier's and an attacker's psychology can impact an IoBT network's security. However, in \cite {ye2018psychology}, the players' resource limitations and IoBT connectivity objectives are not considered. In addition, in \cite {ye2018psychology}, the soldier's actions at each step in the battlefield reveal the soldier's preference on its future actions, as such the psychological forward induction of \cite {ye2018psychology} can be used to solve the proposed security problem.  Yet, in a real battlefield, the soldier' actions can be rather independent at each time step, making the psychological forward induction based solution of \cite {ye2018psychology} infeasible. 
Thus, there is a need to introduce new solutions that dynamically predict and react to the actions of adversaries in the battlefield, while taking the players' resource limitation and IoBT connectivity objectives into consideration.

\subsection{Contributions}
The main contribution of this paper is to analyze the psychological behavior of human decision makers in an adversarial IoBT network, in presence of stringent resource limitations (i.e. time limitation and power limitation) for the players. To our best knowledge, \emph{this is the first work that jointly considers players' resource limitation and their psychological behavior for securing an IoBT network}. Our key contributions include: 

\begin{itemize}

\item We develop a novel framework to dynamically optimize the connectivity between a soldier and the IoBT network. 
We consider a battlefield in which a soldier must accomplish a time-critical mission that requires traversing the battlefield while maintaining connectivity with the IoBT network. Meanwhile, the attacker in the battlefield is interested in compromising the soldier's IoBT connectivity, by selectively jamming the IoBT network at each time instant in the battlefield. The solider, acting as a defender, will selectively connect to certain IoBT devices at each time instant along its mission path, so as to evade the attack.

\item  We formulate this IoBT security problem as a dynamic game, in which the soldier attempts to predict and evade the attacker's attack at each time instant in the battlefield to minimize its cumulative expected retransmission delay, while the attacker aims at optimally targeting the IoBT devices to maximize the soldier's retransmission delay while accounting for its limited cumulative jamming power. Both the soldier's time limitation and the attacker's power limitation are considered in the formulated game. In this regard, we prove the uniqueness of the Nash equilibrium (NE) of this game, under a set of defined conditions, and we study the resulting NE strategies, which allows analysis of the optimal decision making processes of the soldier and attacker based on their built set of beliefs over the strategy on their opponent's strategies.



\item We perform fundamental analysis on the soldier's and attacker's psychology in the battlefield using the framework of psychological game theory\cite{battigalli2009dynamic}. In the formulated psychological game, the psychology of the players (i.e. the soldier and the attacker) is modeled as their intention to \emph{frustrate} each other. The frustration of the players is quantified as the gap, if positive, between their expected payoff and actual payoff. A psychological equilibrium (PE) is used to solve the psychological IoBT game. In this regard, we prove the uniqueness of the PE for our proposed psychological game, under the same set of conditions at which the NE is unique. In addition, our analytical results show that, in an attempt to frustrate the soldier, at the PE, the attacker is more prone to attack the IoBT device with the best channel conditions.

\item  We propose a Bayesian updating algorithm to establish the players' belief system, so as to solve the proposed psychological IoBT game. In this regard, the algorithm characterizes what is known as an $\epsilon$-like psychological self-confirming equilibrium (PSCE) of our proposed psychological game.

\item The results also show that, based on its error-free beliefs on the attacker's psychological strategies and beliefs, the soldier can obtain an up to $15.11\%$ gains in its expected material payoff at equilibrium, compared to a conventional dynamic game. Meanwhile, using Bayesian updating, the soldier and the attacker can achieve $\epsilon$-like beliefs, such that an $\epsilon$-like self-confirming psychological equilibrium of the formulated psychological game can be reached. Simulation results also show that, the non-error-free beliefs, which result from, for example, $10$ iterations in the Bayesian updating algorithm,  can yield up to $9.23\%$ loss in terms of the soldier's expected material payoff. 

\end{itemize}

The rest of this paper is organized as follows. The system model and problem formulation are described in Section \uppercase\expandafter{\romannumeral2}. The psychological analysis of the soldier and attacker is represented in Section \uppercase\expandafter{\romannumeral3}. The Bayesian updating-based solution of the psychological IoBT game is proposed in Section \uppercase\expandafter{\romannumeral4}. In Section \uppercase\expandafter{\romannumeral5}, simulation results are presented and analyzed. Finally, conclusions are drawn in Section \uppercase\expandafter{\romannumeral6}.
 
\section{System Model and Problem Formulation}\label{se:system}

Consider a battlefield in which a soldier seeks to move from an origin $O$ to a destination $D$ along a predefined path, using a minimum amount of time as shown in Fig. \ref{figure1}. At the same time, this soldier tries to communicate with a total of $X$ IoBT devices in a set $\mathcal{X}$ that is uniformly deployed along this path, to get access to situational awareness within the battlefield and to receive instructions from the battlefield commander. The soldier can only associate with one IoBT device at each location. The soldier should communicate with $J<X$ IoBT devices along the path, so as to maintain its total downlink transmission delay lower than $\Delta$, while getting access to the required information on time.
Meanwhile, in this battlefield, an attacker seeks to disrupt the connectivity between the soldier and the IoBT devices by jamming the communication links. Given the limitation on its total power $E$, the attacker can only compromise (i.e. jam) the IoBT network at most $J'$ times along the path. At each step in this battlefield, the soldier and attacker will sequentially choose strategies to realize their objective, based on their perfect observation on what happened in the battlefield. Here, the soldier's objective is minimizing its communication delay, the attacker's objective is maximizing the soldier's communication delay, while minimizing its total power consumption.
\begin{figure}[!t]
  \begin{center}
   \vspace{0cm}
    \includegraphics[width=8cm]{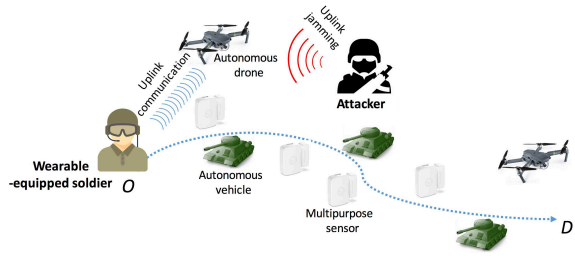}
    \caption{\label{figure1} Soldier battlefield security graph.}
  \end{center}
\end{figure}
\subsection{Soldier's communication delay}
We assume that the soldier (attacker) chooses to connect with (jam) the IoBT network at each step, sequentially, until the soldier arrives at $D$.
The soldier communicates with each IoBT device $x\in\mathcal{X}$ over a downlink channel $c_x$. The signal-to-interference-plus-noise ratio (SINR) $\gamma_{x}$ of the downlink channel between the soldier and IoBT device $x$ is given by:
    \begin{equation}\label{eq:1}
\gamma_{x} = \left\{ {\begin{array}{*{20}{c}}
{\frac{{{P_S}l^s_{x}}}{{{P_A}{l^a_{x}} + {\sigma ^2}}},~\textrm{ if~channel}~c_x~\textrm{is~jammed},}\\
{\frac{{{P_S}l^s_{x}}}{{{\sigma ^2}}},~  \textrm{otherwise},}
\end{array}} \right.
 \end{equation}
where $P_S$  and $P_A$ are, respectively, the transmit powers of soldier and the attacker. $l^s_{x}={g_{s}}{d^{-\lambda }_{s}}$ is the path loss between the soldier and IoBT device $x$, with $g_{s}$ being the Rayleigh fading parameter, $d_{s}$ being the distance between the soldier and IoBT device $x$, and $\lambda$ the path loss exponent. $l^a_{x}={g_{a}}{d^{-\lambda }_{a}}$ is the path loss between the attacker and the IoBT device $x$, with $g_{a}$ being the Rayleigh fading parameter, and $d_{a}$ being the distance between the attacker and the IoBT device $x$. $\sigma^2$ is the power of the Gaussian noise. 
At each step $x$, based on the probability distribution of the Rayleigh fading parameter $g_{s}$ and $g_{a}$, the probability that the soldier's received SINR, $\gamma_x$,  is higher than a threshold, $\hat{\gamma}$, in one time slot is given by: 
    \begin{equation}
{q_x} = \left\{ {\begin{array}{*{20}{c}}
{\int_V^\infty  {f\left( {{g_s}} \right)d{g_s}} },~\textrm{ if~channel}~c_x~\textrm{is~not~jammed},\\
{\int_0^\infty  {\int_0^{\frac{{{g_s} - V}}{W}} {f\left( {g{_a}} \right)f\left( {{g_s}} \right)dg{_a}d{g_s}} } },~\textrm{otherwise},
\end{array}} \right.
 \end{equation}
where ${V} = {\frac{{\widehat \gamma {\sigma ^2}}}{{{d_s}^{ - \lambda }{P_S}}}}$, $W = \frac{{{d^{-\lambda }_{a}}{P_A}\widehat \gamma }}{{d_s^{ - \lambda }{P_S}}}$. Here, $f\left( {{g_s}} \right)$ and $f\left( {{g_a}} \right)$ are the probability density functions of the Rayleigh fading parameters $g_s$ and $g_a$, respectively.  
In the studied battlefield, the soldier attempts to maintain a probability of achieving an SINR exceeding $\hat{\gamma}$, $q_x$, that is higher than a threshold $\hat{q}$.  Hence, the soldier will request $k$ retransmissions of the downlink data from IoBT device $x$. 
However, the soldier will perform $k<\widehat k$ retransmissions, in the case that the channel is occasionally experiencing a small scale fading. 
Thus, $k$ is given by:
 \begin{equation}
k =\left\{ {\begin{array}{*{20}{c}}
{\left \lceil \frac{{\log \left( {1 - \widehat q} \right)}}{{\log \left( {1 - {q_x}} \right)}} \right \rceil},~\textrm{ if}~k<\widehat k,\\
\widehat k, ~\textrm{ if}~k \ge \widehat k,
\end{array}} \right.
 \end{equation}
under the goal of maintaining $\left(1-q_x\right)^k<1-\widehat q$. Thus, the retransmission delay $\tau$ of the soldier at each step is given by $kt_x$. Here, $t_x=\frac{S}{{I_x\log \left( {1 + {\gamma _x}} \right)}}$ is the average unit transmission delay, which is the average duration of a successful packet transmission at the physical medium of one resource block, at step $x$ \cite{yang2008achieving}. $S$ is the size of one resource block, $I_x$ is the bandwidth of channel $c_x$. 
 
\subsection{Strategies of the players}
In the studied battlefield, the objective of the soldier is to effectively maintain a low transmission delay. Thus, the soldier will attempt to communicate with the IoBT devices that will not be attacked. $\mathcal{A}=\left\{ {{a}_1},{{a}_2} \right\}$ represents the soldier's action space at each step $x \in \mathcal{X}$. Here, at each step $x$, ${{a}_1 \in {\mathcal{A}}}$ indicates that the soldier builds a communication link with IoBT device $x$, whereas ${{a}_2 \in {\mathcal{A}}}$ indicates that the soldier does not communicate with IoBT device $x$. 

On the other hand, the objective of the attacker is to increase the retransmission delay of the soldier. As such, under a constraint on its total power consumption, the attacker will find the best time instant to launch an attack on the IoBT network, so as to decrease the SINR of the communication channel between the soldier and IoBT network. 
 The attacker's set of the possible actions at each step $x$ can be represented by $\mathcal{B}=\left\{ {{b}_1},{b}_2 \right\}$.  Here, action ${{b}_1} \in \mathcal{B}$ indicates that the attacker chooses to compromise the IoBT network, while action ${{b}_2} \in \mathcal{B}$ indicates that the attacker does not jam the IoBT network. Note that the jamming power, $P_K$, that will be used by the attacker is assumed to be constant. 
  
  In addition, we use ${h}^x$ to represent the sequence of actions that have been taken by each of the players before reaching step $x$. We refer to $h^x$ as the history at step $x$. In this respect, the set of all possible histories ${h}^x$ at step $x$ is denoted by $\mathcal{H}^x$. In addition, we let $\hat{h}^x$ denote the sequence of actions that have been taken by each player up to step $x$, including the action pair taken at step $x$.  After observing history ${h}^x \in \mathcal{H}^x$ at step $x$, the soldier and attacker will find the optimal strategies at the current step to realize their individual objectives. The set of soldier's feasible actions at history $h^x$ is, then, represented by $\mathcal{A}_{h^x}$, while the set of the attacker's feasible actions at history $h^x$ is represented by $\mathcal{B}_{h^x}$. The actions that are chosen by the soldier and the attacker at history $h^x$ are represented, respectively, by $a\left(h^x\right)$ and $b\left(h^x\right)$. In addition, the set of possible terminal histories $\hat{h}^X$, at which point the soldier reaches $D$, is represented by $\mathcal{Z}$, where $\mathcal{Z}=\left\{\mathcal{H}^X,(\mathcal{A}_{h^X},\mathcal{B}_{h^X})\right\}$.
  

In an adversarial IoBT environment, the soldier will randomize its action selection at each history such as to make it more complex for the attacker to guess the IoBT device to which the soldier aims to connect. The soldier will, hence, choose a probability distribution $\boldsymbol\alpha_{{h}^x}=\left[\alpha_1,\alpha_2\right]$ over its feasible action set $\mathcal{A}_{h^x}$ at history ${h}^x$. In this regard, $\alpha_i$ denotes the probability of choosing action $a_i\in\mathcal{A}_{h^x}$ at history $h^x$, where $i\in\{1,2\}$.  This probability distribution $\boldsymbol\alpha_{{h}^x}$ denotes the soldier's \emph{mixed strategy} at history ${h}^x$. A possible strategy for the soldier in the battlefield can, then, be represented by a set ${\alpha} = \left\{\boldsymbol{\alpha}_{{h}^x}\left| {{h}^x\in\mathcal{H}}^x, x\in \mathcal{X} \right. \right\}$. The set of all feasible strategies of the soldier is denoted by $\mathcal{C}$.
 
A similar randomization logic is used by the attacker. The attacker seeks to choose a probability distribution $\boldsymbol\beta_{{h}^x}=\left[\beta_1,\beta_2\right]$ over its feasible action set $\mathcal{B}_{h^x}$ at each history ${h}^x$, so as to maximize the soldier's transmission delay while keeping its consumed jamming power at a minimum. In this respect, $\beta_i$ corresponds to the probability of choosing action $b_i\in\mathcal{B}_{h^x}$ at history $h^x$, where $i\in\{1,2\}$. This probability distribution $\boldsymbol\beta_{{h}^x}$ is the attacker's mixed strategy at history ${h}^x$. A possible strategy for the attacker can, hence, be denoted by a set ${\beta} = \left\{\boldsymbol{\beta}_{{h}^x}\left| {{h}^x\in\mathcal{H}^x}, x\in \mathcal{X} \right. \right\}$. The set of all possible strategies of the attacker is denoted by $\mathcal{D}$.

\subsection{Material payoff}
We define the soldier's \emph{material payoff} as the normalized gap between the sum of the soldier's actual communication delay and the soldier's maximum tolerable communication delay. Meanwhile, we define the attacker's \emph{material payoff} as the weighted sum of the soldier's time delay and the attacker's power consumption. 

Note that, in (2), $q_x$ depends on both the soldier and attacker's actions $a\left(h^x\right)$ and $b\left(h^x\right)$ in the form: 
  \begin{equation}
\begin{split}
q_x&\left(a\left(h^x\right)=a_1, b\left(h^x\right)\right)=\left(1-{\mathbbm{1}_{b\left(h^x\right)=b_1}}\right){\int_V^\infty  {f\left( {{g_x}} \right)d{g_x}} }\\
&+{\mathbbm{1}_{b\left(h^x\right)=b_1}}{\int_0^\infty  {\int_0^{\frac{{{g_x} - V}}{W}} {f\left( {g{_a}} \right)f\left( {{g_x}} \right)dg{_a}d{g_x}} } },
\end{split}
 \end{equation}
 where ${\mathbbm{1}_{b\left(h^x\right)=b_1}}$ is an indicator function that only equals to $1$ when the current action of the attacker is $b_1$.
 Hence, the soldier's time delay, when attempting to communicate with $x\in\mathcal{X}$ at history $h^x$, is given by:
 \begin{equation}
 \begin{split}
 \tau& \left( {a\left( h^x \right)=a_1,b\left( h^x \right)} \right) \\
 &= \left\{ {\begin{array}{*{20}{c}}
{ \left\lceil {\frac{{\log \left( {1 - \hat q} \right)}}{{\log \left( {1 - {q_x}\left( {a\left( h^x \right),b\left( h^x \right)} \right)} \right)}}} \right\rceil t,~\textrm{if}~k < \hat k,}\\
{\hat kt,~\textrm{otherwise}.}
\end{array}} \right.
 \end{split}
 \end{equation}

 In case the soldier does not communicate with the IoBT device at history $h^x$, the soldier will naturally not incur any delay which leads to $\tau(a(h^x)=2,b(h^x))=0$.
%
%
%
As such, the soldier's retransmission delay is a function of the soldier's and attacker's actions. At the terminal history $\hat{h}^X$, the soldier's accumulated communication delay will be:
\begin{equation}
 \tau\left({\hat{h}}^X\right) =\sum\limits_{v = 1}^X {\tau {\left({a}\left(h^v\right),{b\left(h^v\right)}\right)} },
 \end{equation}
   where ${a}\left(h^v\right)$ and ${b}\left(h^v\right)$ represent, respectively, the soldier's and attacker's action at step $v$ in ${\hat{h}}^X$. Note that, under each terminal history $\hat{h}^X \in \mathcal{Z}$, $\sum\limits_{v = 1}^{X}{\mathbbm{1}_{a\left(h^v\right)=a_1}}=J$ and $\sum\limits_{v = 1}^{X}{P_K\mathbbm{1}_{b\left(h^v\right)=b_1}} \le E$. Here, we note that, even though not communicating with any device will lead to a minimum delay for the soldier, this is not a feasible strategy for the soldier, since by definition, the soldier has to communicate with $0<J\leq X$ devices in the battlefield so as to acquire situational awareness. Based on its primary objective, the soldier will determine an optimal strategy ${\alpha}$ that minimize its expected total time delay. This, hence, requires maximizing the normalized gap between the cumulative retransmission delay and the maximum tolerable delay, which is defined as :
  \begin{equation}
\begin{split}
&{{\pi}} \left( {{{\alpha}, {\beta}}} \right)=\frac{\Delta-\sum\limits_{{\hat{h}^X} \in \mathcal{Z}} Q_{{\alpha}, {\beta}} \left({\hat{h}}^X\right) {\tau {\left({\hat{h}}^X\right)}  }}{\Delta},
\end{split}
 \end{equation}
 where $Q_{{\alpha}, {\beta}}\left({\hat{h}}^X\right)$ is the probability of occurrence of terminal history $\hat{h}^X \in \mathcal{Z}$, and is induced by the soldier's and the attacker's mixed-strategies, $\alpha$ and $\beta$ as follow:
   \begin{equation}
\begin{split}
{Q_{\alpha ,\beta }}\left( {{{\hat{h}}^X}} \right) = &\mathop \prod \limits_{x = 1}^{X} \left( {{\alpha _{{h^x}}}\left( 1 \right)  {\mathbbm{1}_{a\left( h^x \right) = {a_1}}} + {\alpha _{{h^x}}}\left( 2 \right) {\mathbbm{1}_{a\left( h^x \right) = {a_2}}}} \right)  \\
&\times\left( {{\beta _{{h^x}}}\left( 1 \right) {\mathbbm{1}_{b\left( h^x \right) = {b_1}}} + {\beta _{{h^x}}}\left( 2 \right) {\mathbbm{1}_{b\left( h^x \right) = {b_2}}}} \right),
\end{split}
 \end{equation}
where history ${h}^x$ is part of $\hat{h}^X$. In other words, $h^x$ represents the sequence of actions in $\hat{h}^X$ taken before $x$. Hence, $\pi(\alpha,\beta)$ represents the soldier's expected utility (or, equivalently, expected material payoff) achieved under the strategy pair $(\alpha,\beta)$.
 
Meanwhile, the material payoff of the attacker at the terminal history ${\hat{h}}^X$ is defined as 
\begin{equation}
\begin{split}
\pi'_0&\left({\hat{h}}^X\right) =\\
&\theta_1\frac{{\sum\limits_{v = 1}^X {\tau {\left({a}\left(h^v\right),{b\left(h^v\right)}\right)} } }}{\Delta }+\theta_2  \frac{{E - \sum\limits_{v = 1}^X{P_K}  {\mathbbm{1}_{{b}\left(h^v\right) = {b_1}}}}}{E},
\end{split}
\end{equation}
where $\theta_1 \ge 0$ and $\theta_2 \ge 0$ represent, respectively, the weight of time delay and power consumption, with $\theta_1+\theta_2=1$. Thus, in this battlefield, the attacker will select the optimal strategy ${\beta}$ that maximizes the soldier's time delay\footnote{Here, we assume that $\Delta$ can be learnt by the attacker using its knowledge of the IoBT devices' quantity and channel condition, or through, for example, a prior reconnaissance phase about the soldier and its objectives.}, while minimizing its power consumption, which can be captured by maximizing the following expected utility (i.e. expected material payoff):
  \begin{equation}
\begin{split}
&\pi'\left( {{{\alpha}, {\beta}}} \right) =\sum\limits_{{\hat{h}^X} \in \mathcal{Z}} Q_{{\alpha},{\beta}}\left({\hat{h}}^X\right)  {\pi'_0\left({\hat{h}}^X\right)  }.
\end{split}
 \end{equation}



In this battlefield, 
the attacker can track the soldier's location via GPS, and it can gather intelligence (i.e., knowledge) on the soldier's associated objective. However, it does not know the IoBT devices to which the soldier will connect. Meanwhile, the soldier knows that the attacker is present, but does not know which IoBT devices it will target. 
Then, to determine their optimal actions at each history, the soldier and attacker aim at forming an estimation of their opponent's actions (e.g. the attacker estimates the soldier's actions, and the soldier estimates the attacker's actions). This estimation is defined as the soldier and attacker's \emph{first-order beliefs} on each other. 
Let $\boldsymbol{\delta}^1_{h^x}  = \left[ {{\delta ^1_{h^x}}\left( 1 \right),{\delta ^1_{h^x}}\left( 2 \right)} \right]$ be the soldier's vector of beliefs on the probability distribution of the attacker's actions  ${{b}_1}$ and $b_2$ at history $h^x$, and let $\boldsymbol{\rho} ^1_{h^x} = \left[ {\rho^1_{h^x}\left( 1 \right),\rho^1_{h^x}\left( 2 \right)} \right]$ be the attacker's belief vector on the probability distribution of the soldier's actions ${{a}_1}$ and $a_2$ at history $h^x$, respectively.  As such, we let $\delta^1_{\mathcal{H}^x}$ and $\rho^1_{\mathcal{H}^x}$, denote the set of first-order beliefs of, respectively, the soldier and attacker for each possible history at step $x$.  Hereinafter, we use ${\delta}^1=\left\{\boldsymbol{\delta}  ^1_{\mathcal{H}^1}, \cdots, \boldsymbol{\delta}  ^1_{\mathcal{H}^X}\right\}$ to denote a set of soldier's first-order beliefs on the attacker, and ${\rho}^1=\left\{\boldsymbol{\rho}  ^1_{\mathcal{H}^1}, \cdots, \boldsymbol{\rho}  ^1_{\mathcal{H}^X}\right\}$ to denote a set of attacker's first-order beliefs on the soldier, at each possible history.

 Based on belief ${\delta}^1$, the soldier's perceived (i.e. belief-based) expected material payoff will be given by:
  \begin{equation}
\begin{split}
{\overline{\pi}} \left( {{{\alpha},{\delta}^1}} \right)=\frac{\Delta-\sum\limits_{{\hat{h}^X} \in \mathcal{Z}} Q_{{\alpha}, {\delta}^1}{\left({\hat{h}}^X\right)} {\tau {\left({\hat{h}}^X\right)}  }}{\Delta},
\end{split}
 \end{equation}
 where $Q_{{\alpha}, {\delta}^1}{\left({\hat{h}}^X\right)}$ is the belief-based probability of occurrence of the terminal history $\hat{h}^X \in \mathcal{Z}$ induced by ${\alpha}$ and ${\delta}^1$:
    \begin{equation}
\begin{split}
{Q_{\alpha ,{\delta}^1 }}\left( {{{\hat{h}}^X}} \right) = &\mathop \prod \limits_{x = 1}^{X} \left( {{\alpha _{{h^x}}}\left( 1 \right)  {\mathbbm{1}_{a\left( h^x \right) = {a_1}}} + {\alpha _{{h^x}}}\left( 2 \right) {\mathbbm{1}_{a\left( h^x \right) = {a_2}}}} \right)\\
&\times\left( {{{\delta}^1 _{{h^x}}}\left( 1 \right) {\mathbbm{1}_{b\left( h^x \right) = {b_1}}} + {{\delta}^1 _{{h^x}}}\left( 2 \right) {\mathbbm{1}_{b\left( h^x \right) = {b_2}}}} \right).
\end{split}
 \end{equation}
 
 

 Similarly, given its first-order belief ${\rho}  ^1$, the attacker's perceived (i.e. belief-based) expected material payoff under strategy ${\beta}$ will be:
    \begin{equation}\small
\begin{split}
\overline\pi'\left( {{{\beta}, {\rho}  ^1}} \right) =\sum\limits_{{\hat{h}^X} \in \mathcal{Z}} Q_{{\rho}  ^1, {\beta} } {\left({\hat{h}}^X\right)} {\pi'_0\left({\hat{h}}^X\right)  },
\end{split}
 \end{equation}
  where $Q_{{\rho}  ^1, {\beta} }{\left({\hat{h}}^X\right)} $ is the belief-based probability of occurrence of the terminal history $\hat{h}^X \in \mathcal{Z}$ induced by ${\rho}  ^1$ and ${\beta}$:
  
     \begin{equation}
\begin{split}
{Q_{{\rho}  ^1 ,\beta }}\left( {{{\hat{h}}^X}} \right) = &\mathop \prod \limits_{x = 1}^{X} \left( {{{\rho}  ^1 _{{h^x}}}\left( 1 \right)  {\mathbbm{1}_{a\left( h^x \right) = {a_1}}} + {{\rho}  ^1 _{{h^x}}}\left( 2 \right) {\mathbbm{1}_{a\left( h^x \right) = {a_2}}}} \right)\\
&\times \left( {{\beta _{{h^x}}}\left( 1 \right) {\mathbbm{1}_{b\left( h^x \right) = {b_1}}} + {\beta _{{h^x}}}\left( 2 \right) {\mathbbm{1}_{b\left( h^x \right) = {b_2}}}} \right).
\end{split}
 \end{equation}  
  \subsection{Game formulation}
In the studied IoBT scenario, the primary objective of the soldier is to find a strategy to effectively evade the jamming attack of the attacker, while the objective of the attacker is to find an attack strategy that effectively jams the soldier's communication with the IoBT devices. 
As such, we formulate a dynamic game $[\mathcal{P},\mathcal{H}, \mathcal{Z}, \pi, \pi', \overline\pi, \overline\pi']$ to capture the dependence between the objectives and the actions of the soldier and the attacker. Here, $\mathcal{P}$ is the set of players which includes the soldier and attacker. 
$\mathcal{H}$ is the set of histories representing the sequence of actions that have been taken by each of the players before reaching a certain stage, 
and $\mathcal{Z}$ represents the set of terminal histories, at which point the soldier reaches its destination, $D$, and the game ends. $\pi$ and $\pi'$ are the expected utilities of the soldier and the attacker, respectively, defined in (7) and (10), while $\bar{\pi}$ and $\bar{\pi}'$ are their belief-dependent (i.e. perceived) expected utilities, defined in (11) and (13). In this formulated game, each of the soldier and the attacker aim at maximizing their (belief-based) expected utilities. When the beliefs of each player accurately predict the strategy of the opponent, and when each player chooses a strategy that maximizes its expected utility based on those beliefs, these strategies give rise to a \emph{Nash equilibrium (NE)} for the proposed game, which is formally defined as follows:

\begin{definition}\emph{A \emph{Nash equilibrium (NE)} for the formulated dynamic game is defined as $\left({\alpha}^*, {\beta}^*, {\rho}^{1*}, {\delta}^{1*}\right)$,  in which ${\alpha}^*$, and ${\beta}^*$ are rational, such that: 
\begin{equation}
{\alpha}^* \in \mathop {\arg \max }\limits_{{\alpha}\in \mathcal{C}} {\overline{\pi}} \left( {{{\alpha},{\delta}^{1*}}} \right),
\end{equation}
\begin{equation}
{\beta}^* \in \mathop {\arg \max }\limits_{{\beta} \in\mathcal{D}} \overline\pi'\left( {{{\beta}, {\rho}  ^{1*}}} \right),
\end{equation}
while beliefs ${\rho}^{1*}$ and ${\delta}^{1*}$ are error-free such that 
for all ${a}_n \in \mathcal{A}_{h^x}$ at each history ${h}^x$ in the game: 
\begin{equation}
{{\rho}_{{h}^x} ^{1*}}\left(n \right) = {\alpha}^{*}_{{h}^x}\left(n \right),
\end{equation} 
and for all $b_m \in \mathcal{B}_{h^x}$ at each history ${h}^x$ in the game:
\begin{equation}
{{\delta}^{1*}_{{h}^x}}\left( m \right) =  {\beta}^*_{{h}^x}\left( m \right).
\end{equation}}
\end{definition}
 Thus, at an NE of the proposed game, both the soldier and attacker correctly estimate their opponents' strategies (represented by an error-free set of beliefs over the opponent's strategy) and make rational determinations on their strategies based on their error-free beliefs, at every history of the game. As shown in Definition 1, the rational strategies of the players (i.e. the soldier and the attacker) are the strategies that maximize the players' error-free belief-based perceived expected payoff, $\overline\pi$ and $\overline\pi'$. At each history $h^x\in\mathcal{H}^x$, the players' error-free first-order beliefs (i.e. $\delta^{1*}$ and $\rho^{1*}$) on each of their opponents' feasible action (i.e. $b_m\in\mathcal{B}_{h^x}$, $a_n\in\mathcal{A}_{h^x}$) equals the probability that their opponents choose this action with their rational strategies (i.e. $\alpha^*$ and $\beta^*$). The players, including the soldier and the attacker, are considered to hold accurate (error-free) beliefs in the computation of their respective NE strategies. Hence, these error-free beliefs require that, at equilibrium, beliefs should accurately predict the opponent's strategy. However, in practical networks, the players' beliefs may not be fully accurate, when no effective prediction method is used. Hence, when solving (15) and (16), the resulting soldier and attacker strategies are rational (i.e. optimal), but are based on their respective beliefs. If these beliefs are not accurate (i.e. if (17) and (18) are not met), even through each of the players is still acting rationally, their strategies may deviate from the NE strategies.   

 Moreover, in practice, as emotional human players, the soldier and the attacker may also deviate from their NE strategies \cite{colman2003cooperation}. In this case, despite being theoretically valid, the error-free beliefs, $\delta^{1*}$ and $\rho^{1*}$ defined in Definition 1 may not be consistent with the players' actual emotional strategies. As such, the rational strategies, $\alpha^*$ and $\beta^*$, that maximize the players' $\delta^{1*}$-based and $\rho^{1*}$-based perceived expected payoffs, may not maximize the players real expected payoffs  when the opponent deviates from its fully rational strategies, due to behavioral factors \cite{battigalli2009dynamic}. In addition, the soldier's and attacker's subjective emotions may also modify their objective functions to incorporate additional subjective goals. These psychological facets of the player's behavior in an IoBT network, are studied next using the framework of a dynamic psychological game \cite{battigalli2009dynamic}.  
 

 \section{Dynamic Psychological Game}\label{se:system}
 The formulated dynamic game in Section II captures the primary objectives of the soldier and attacker and the interdependence between these objectives. In this respect, in this game, each player, by using a set of beliefs about the opponent's strategy, aims at computing its optimal strategy to maximize its respective expected utility. Hence, the beliefs are considered to be solely a means using which a player can estimate its opponent's strategy in order to choose its optimal strategy, but are not considered a part of the utility function of each player. However, given the psychological (i.e. human) nature of the players in our game, their expectations, beliefs, and emotions have a direct effect on the way they perceive the outcome of the game. Indeed, by not achieving their expected (or belief-based) payoff, the soldier or attacker will experience frustration or anger, which has a direct impact on the way they assess and perceive the outcome of the game. In addition, due to the adversarial nature of the relationship between the soldier and attacker, in addition to achieving their own objective by maximizing/minimizing the communication delay, each may also strive to intentionally hurt the opponent, by aiming at frustrating the opponent or, more generally, causing a psychological (i.e. emotional) damage to this opponent. Hence, incorporating this psychological aspect in the formulation of the utility functions of each player enables a more general and representative game analysis that can realistically capture the psychological decision making processes and behavior of each of the soldier and attacker.

To this end, we next incorporate notions from psychological games~\cite{{battigalli2009dynamic}} in our game formulation to capture and analyze this psychological aspect of the decision making processes of the attacker and soldier. As such, in our introduced psychological game, the players expectations and beliefs will now be directly incorporated in their utility functions. In addition, given their objective to frustrate and anger the opponent, each player aims at anticipating the payoff that the opponent expects. To this end, in addition to building beliefs over the opponent strategies, each player also aims at building a belief system over the opponent's beliefs. This would, hence, enable anticipating the expectations of the opponent and, as a result, maximize its frustration. 

   \subsection{Psychology in the battlefield}
 
     In the aforementioned IoBT scenario, when one player (i.e. the soldier or the attacker) chooses its strategy such that its opponent receives a material payoff lower than expected, this player successfully frustrates its opponent. For example, if the soldier believes that the attacker did not launch an attack on the IoBT network at step $x$, it will communicate with this IoBT device $x$ and expect to achieve a material payoff $\overline \pi$. If, in reality, the attacker attacked $x$, then the material payoff of the soldier will decrease to $ \tilde\pi$. The gap between $\overline \pi$ and $ \tilde\pi$ quantifies the soldier's frustration. Note that the soldier and attacker only feel frustrated when they get a lower material payoff, compared to their expected material payoff. Hence, in our proposed psychological game formulation, the soldier and attacker will intentionally attempt to frustrate each other while also seeking to achieve their own, individual objectives. Ultimately, the soldier and attacker's intention to frustrate each other, combined with their individual objectives  (i.e. to minimize or maximize the soldier's communication delay), will determine the soldier and attacker's strategies in the battlefield.
   
     To consider their opponents' frustration in their own payoffs, the soldier and attacker should estimate their opponent's expected payoffs. This estimation requires the soldier and the attacker to build beliefs about their opponent's first-order beliefs, i.e. to build \emph{second-order beliefs}.
 The soldier's second-order belief on the attacker's first-order belief  at history $h^x$, $\boldsymbol{\rho}  ^1_{{h}^x}$, is denoted by a vector $\boldsymbol{\delta}^2_{{h}^x}  = \left[ {{\delta ^2_{{h}^x}}\left( 1 \right),{\delta^2_{{h}^x}}\left( 2 \right)} \right]$. The attacker's second-order belief on the soldier's belief $\boldsymbol{\delta} ^1_{{h}^x}$ is denoted by $\boldsymbol{\rho} ^2_{{h}^x} = \left[ {\rho^2_{{h}^x}\left( 1 \right),\rho^2_{{h}^x}\left( 2 \right)} \right]$.  Hereinafter, we use ${\delta}^2=\left\{\boldsymbol{\delta}  ^2_{\mathcal{H}^1}, \cdots, \boldsymbol{\delta}  ^2_{\mathcal{H}^X}\right\}$ to denote the set of soldier's second-order beliefs on the attacker, and ${\rho}^2=\left\{\boldsymbol{\rho}  ^2_{\mathcal{H}^1}, \cdots, \boldsymbol{\rho}  ^2_{\mathcal{H}^X}\right\}$ to denote the set of attacker's second-order beliefs on the soldier, for each possible history.

  \subsection{Soldier and attacker's frustration}

We define the soldier's and attacker's frustration as the gap between their expected material payoffs, respectively defined in (11) and (13), and their actual material payoffs. This frustration, indeed, stems from the fact that the soldier (attacker) may choose an action $a_n \in \mathcal{A}_{h^x}$ ($b_m\in \mathcal{B}_{h^x}$) that may be different from what the attacker (soldier) has anticipated based on its belief $\rho^1_{{h}^x}$ ($\delta^1_{{h}^x}$). Thus, in the considered IoBT network, under terminal history $\hat{h}^X$, the soldier's frustration with strategy ${\alpha}$ and belief ${\delta}^1$ will be given by (given that the soldier aims at maximizing $\overline{\pi}$ defined in (11)):
 \begin{equation}
\begin {split}
F\left( {{\alpha}, {{\delta}}^1}, \hat{h}^X \right) = \left[ {\overline{\pi}} \left( {{{\alpha},{\delta}^1}} \right)-{\tilde{\pi}} \left( {{\hat{h}^X}} \right) \right]^+,
\end {split}
 \end{equation}
 where $\left[x\right]^+= {\max } \left\{0,x\right\}$. ${\tilde{\pi}} \left( {{\hat{h}^X}} \right)=\frac{\Delta-{\tau {\left({\hat{h}}^X\right)}  }}{\Delta}$ is the soldier's actual payoff under terminal history $\hat{h}^X$.   
 Note that the attacker has no knowledge of the soldier's first-order belief, ${\delta}^1$, and strategy, $\alpha$. Hence, based on its sets of first-order and second-order beliefs, ${\rho}^1$ and ${\rho}^2$, on the soldier's strategy ${\alpha}$ and first-order belief ${\delta}^1$, the attacker can form a belief-based perception of the soldier's frustration, denoted by $F_a(\rho^1,\rho^2,\hat{h}^X)$, when a terminal history $\hat{h}^X$ occurs, is expressed as follows:
 \begin{equation}
\begin{split}
&F_a\left( {  {{\rho}}^1, {{\rho}}^2, \hat{h}^X}\right) = \left[ {\overline{\pi}} \left( {{{{\rho}}^1,{{\rho}}^2}} \right)-{\tilde{\pi}} \left( \hat{h}^X \right) \right]^+,
\end{split}
\end{equation}
where ${\overline{\pi}} \left( {{{{\rho}}^1,{{\rho}}^2}} \right)=\frac{\Delta-\sum\limits_{{\hat{h}^X} \in \mathcal{Z}} Q_{{{\rho}}^1,{{\rho}}^2}{\left({\hat{h}}^X\right)} {\tau {\left({\hat{h}}^X\right)}  }}{\Delta}$. In addition, $Q_{{{\rho}}^1,{{\rho}}^2}{\left({\hat{h}}^X\right)} $ is the attacker's perceived belief-based probability of occurrence of the terminal histories $\hat{h}^X \in \mathcal{Z}$ induced by its first-order and second-order beliefs, ${\rho}  ^1$ and ${{\rho}}^2$:
   \begin{equation}
\begin{split}
{Q_{{{\rho}}^1,{{\rho}}^2 }}\left( {{{\hat{h}}^X}} \right) = &\mathop \prod \limits_{x = 1}^{X} \left( {{{{\rho}}^1 _{{h^x}}}\left( 1 \right)  {\mathbbm{1}_{a\left( h^x \right) = {a_1}}} + {{{\rho}}^1 _{{h^x}}}\left( 2 \right) {\mathbbm{1}_{a\left( h^x \right) = {a_2}}}} \right)\\
&\times \left( {{{{\rho}}^2 _{{h^x}}}\left( 1 \right) {\mathbbm{1}_{b\left( h^x \right) = {b_1}}} + {{{\rho}}^2 _{{h^x}}}\left( 2 \right) {\mathbbm{1}_{b\left( h^x \right) = {b_2}}}} \right).
\end{split}
 \end{equation}
 
 
Thus, combining the attacker's primary objective of maximizing the soldier's communication delay at a minimum needed total jamming power with its intention to frustrate the soldier results in the following {belief-based expected psychological payoff (i.e. belief-based expected psychological utility)}:
 \begin{equation}\small
\begin{split}
u'&\left( { {{\beta}, {\rho}}^1,{{\rho}}^2 } \right) \\
&= \overline\pi'\left( {{{\beta}, {\rho}  ^1}} \right) +{\omega  _a} \sum\limits_{{\hat{h}^X} \in \mathcal{Z}} Q_{{\rho^1}, {\beta}}{\left({\hat{h}}^X\right)} F_a\left( {  {{\rho}}^1, {{\rho}}^2,\hat{h}^X}\right),
\end{split}
 \end{equation}
 where $\omega_{a}\in\left[0,1\right]$ is a parameters that represents the attacker's motivation  and willingness to frustrate the soldier.

 Similarly, under history $\hat{h}^X$, the frustration of the attacker with strategy ${\beta}$,  under the first-order belief ${{\rho}}^1$, will be:
  \begin{equation}
\begin {split}
&F'\left({\beta},  {\rho}  ^1,  \hat{h}^X\right) = \left[\overline\pi'\left( {{{\beta}, {\rho}  ^1}} \right)  -  \pi'_0\left( \hat{h}^X \right)\right]^+,
\end{split}
 \end{equation}
 where $\pi'_0\left( \hat{h}^X \right)$ is the attacker's actual payoff at terminal history, as defined in (9). 
 Based on its first-order and second-order beliefs, ${\delta} ^1$ and ${\delta} ^2$, the soldier can form a belief-based perception of qualify the attacker's frustration, denoted by $F_s(\delta^1, \delta^2,\hat{h}^X)$, when terminal history $\hat{h}^X$ is achieved, as follows:
\begin{equation}
\begin{split}
&F_s \left(  {\delta} ^1, {\delta} ^2,\hat{h}^X  \right) = \left[\overline\pi'\left( {{{\delta} ^1, {\delta} ^2}} \right)  -  \pi'_0\left( \hat{h}^X\right)\right]^+,
\end{split}
\end{equation}
where $\pi'\left( {{{\delta} ^1,{\delta} ^2}} \right)=\sum\limits_{{\hat{h}^X} \in \mathcal{Z}} Q_{{\delta} ^2,{\delta} ^1}\left({\hat{h}}^X\right) {\pi'_0\left({\hat{h}}^X\right)  }$. Here, $Q_{{\delta} ^2,{\delta} ^1}\left({\hat{h}}^X\right)$ is the  perceived belief-based probability of occurrence of terminal history $\hat{h}^X \in \mathcal{Z}$ based on the soldier's first-order and second-order beliefs, ${\delta}  ^1$ and ${{\delta}}^2$, and is defined as:
   \begin{equation}
\begin{split}
{Q_{{\delta} ^2,{\delta} ^1}}\left( {{{\hat{h}}^X}} \right) = &\mathop \prod \limits_{x = 1}^{X} \left( {{{\delta} ^2 _{{h^x}}}\left( 1 \right)  {\mathbbm{1}_{a\left( h^x \right) = {a_1}}} + {{\delta} ^2 _{{h^x}}}\left( 2 \right) {\mathbbm{1}_{a\left( h^x \right) = {a_2}}}} \right)\\
&\times\left( {{{\delta} ^1 _{{h^x}}}\left( 1 \right) {\mathbbm{1}_{b\left( h^x \right) = {b_1}}} + {{\delta} ^1 _{{h^x}}}\left( 2 \right) {\mathbbm{1}_{b\left( h^x \right) = {b_2}}}} \right).
\end{split}
 \end{equation}

Then, the soldier's goal to minimize its expected communication delay combined with its intention to frustrate the attacker can be captured by the following {belief-based expected psychological payoff (i.e. belief-based expected psychological utility)}:  
   \begin{equation}\small
\begin {split}
u& \left({\alpha}, {\delta} ^1,{\delta} ^2 \right) \\
&={\overline{\pi}} \left( {{{\alpha},{\delta}^1}} \right) +{\omega _{s}}\sum\limits_{{\hat{h}^X} \in \mathcal{Z}} Q_{{\alpha}, {\delta}^1}{\left({\hat{h}}^X\right)}  F_s \left( {\delta} ^1, {\delta} ^2,\hat{h}^X  \right),
\end{split}
 \end{equation}
where $\omega_{s}\in\left[0,1\right]$ is a parameter that represents the soldier's motivation to frustrate the attacker. 

  
\subsection{Dynamic Psychological game}
To capture the decision making processes of of the soldier and attacker, we introduce a \emph{dynamic psychological game} $[\mathcal{P},\mathcal{H}, \mathcal{Z}, u, u']$, 
where, similarly to the dynamic game defined in Section II-E, $\mathcal{P}$ is the set of players including the soldier and attacker. 
$\mathcal{H}$ is the set of histories, 
 and $\mathcal{Z}$ is the set of terminal histories.  In addition, $u$ and $u'$ represent the soldier's and the attacker's psychological expected utility defined in (26) and (22), respectively. In this psychological game, the soldier and the attacker aim at maximizing their belief-based psychological expected utilities. In this regard, when the first-order and second-order beliefs of each player correctly predict the strategy and the first-order belief of the opponent, and when each player chooses a strategy that maximizes its belief-based psychological expected utility based on those correct beliefs, these strategies give rise to a \emph{psychological equilibrium (PE)} \cite{{battigalli2009dynamic}}. In this respect, the PE of our proposed psychological game is formally defined as follows:

\begin{definition}\emph{The \emph{psychological equilibrium} of the formulated psychological game is defined as $\left({\alpha}^*, {\beta}^*, {\delta}^{1*}, {\delta}^{2*}, {\rho}^{1*}, {\rho}^{2*}\right)$,  in which ${\alpha}^*$, and ${\beta}^*$ are rational, such that: 
\begin{equation}
{\alpha}^* \in \mathop {\arg \max }\limits_{{\alpha} \in \mathcal{C}} u \left( {\alpha}, {\delta} ^{1*},{\delta} ^{2*}, \right),
\end{equation}
\begin{equation}
{{\beta}}^* \in \mathop {\arg \max }\limits_{{\beta} \in \mathcal{D}} u'\left( {{\beta}, {{\rho}}^{1*},{{\rho}}^{2*}, } \right),
\end{equation}
while  the first-order and second-order beliefs, ${\delta}^{1*}$, ${\delta}^{2*}$, ${\rho}^{1*}$, and ${\rho}^{2*}$, are error-free such that 
for all ${a}_n \in \mathcal{A}_{h^x}$ at each history ${h}^x$: 
\begin{equation}
{{\delta}^{2*}_{{h}^x} }\left( n \right) ={{\rho}_{{h}^x} ^{1*}}\left(n \right) ={\alpha}^*_{{h}^x}\left(n \right),
\end{equation} 
and for all $b_m \in \mathcal{B}_{h^x}$ at each history ${h}^x$:
\begin{equation}
{{\rho}_{{h}^x} ^{2*}}\left(m \right) ={{\delta}^{1*}_{{h}^x}}\left( m \right) = {\beta}^{*}_{h^x}\left( m \right).
\end{equation}}
\end{definition}

The principal difference between the PE and the NE (which is an underlying difference between the proposed standard dynamic game and the proposed psychological game) is that the utility function of each player is not only dependent on the strategy or action chosen by the opponent, but also on the opponent's beliefs. In this regard, the payoff of each player in the psychological game does not only depend on what the opponent does, but also on what the opponent thinks. This enlarges the domain of analysis of the game to incorporate psychological aspects of the players' decision making processes, which are not typically present in a traditional dynamic game formulation.  Hence, even though the definition of the PE still requires that the first-order and second-order beliefs of each player are error-free, since these beliefs are incorporated in the payoffs of each player, they will have a direct effect on their rationally chosen (i.e. PE) strategies. In essence, at a PE, the players' intention to frustrate one another is captured, as the soldier and the attacker make rational determination on their strategies to maximize both their belief-based expected material payoff and their opponents' frustration, based on their error-free first-order and second-order beliefs. 
Based on \cite{battigalli2009dynamic}, there always exists at least one such PE in the formulated psychological game. In particular, under the assumptions that i) evaded attacks at history $h^x$ yield higher expected material payoffs for the soldier and lower expected material payoffs for the attacker, at the current and future histories, and ii)  a successful (unjammed) communication at history $h^x$ yields a higher expected material payoff for the soldier and a lower expected material payoff for the attacker, at the current and future histories, we can derive Theorem 1 and Theorem 2:

\begin{theorem}\label{theorem1}
\emph{The NE and the PE of, respectively, the conventional dynamic game and the psychological game are unique. }

\end{theorem}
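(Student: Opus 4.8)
The plan is to reduce each equilibrium to a profile of per-history stage equilibria and then establish uniqueness by backward induction over the finite horizon. First I would exploit the error-free belief conditions. In Definition 1, conditions (17)--(18) force $\rho^{1*}=\alpha^*$ and $\delta^{1*}=\beta^*$, so the belief-based payoffs $\overline\pi(\alpha,\delta^{1*})$ and $\overline\pi'(\beta,\rho^{1*})$ collapse to the true payoffs $\pi(\alpha,\beta^*)$ and $\pi'(\alpha^*,\beta)$. Hence a NE is nothing more than a strategy pair $(\alpha^*,\beta^*)\in\mathcal{C}\times\mathcal{D}$ satisfying the mutual best-response conditions (15)--(16), and proving Theorem 1 for the dynamic game amounts to showing this pair is unique. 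The analogous substitution via (29)--(30) turns the second-order beliefs into $\delta^{2*}=\alpha^*$ and $\rho^{2*}=\beta^*$, so the psychological utilities $u,u'$ and the frustration terms $F_s,F_a$ become functions of $(\alpha^*,\beta^*)$ alone; a PE likewise reduces to a unique strategy pair solving (27)--(28).

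Next I would set up the backward induction. Because the horizon $X$ is finite, monitoring is perfect, and both the cumulative delay in (7) and the cumulative power in (9) are additively separable across steps, the continuation payoff at any history $h^x$ depends only on the state $(x,j,j')$, where $j=\sum_{v<x}\mathbbm{1}_{a(h^v)=a_1}$ counts the completed communications and $j'=\sum_{v<x}\mathbbm{1}_{b(h^v)=b_1}$ the jams already spent against the budgets $J$ and $J'=\lfloor E/P_K\rfloor$. At the terminal step the interaction is a $2\times2$ game: the soldier chooses $a_1$ versus $a_2$ and the attacker $b_1$ versus $b_2$, with entries given by the immediate delay/power plus the (already-determined) continuation value. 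Assumptions (i) and (ii) make this stage game strictly competitive of matching-pennies type: by (ii) a successful unjammed communication $(a_1,b_2)$ strictly raises the soldier's and strictly lowers the attacker's continuation payoff, while by (i) an evaded attack strictly helps the soldier and hurts the attacker. These strict inequalities rule out any pure-strategy equilibrium and pin the interior mixing probabilities uniquely through the two indifference conditions. I would then induct: given a unique continuation value at every successor state, the stage game at $(x,j,j')$ is again strictly competitive with a unique mixed equilibrium, so $\alpha^*_{h^x}$ and $\beta^*_{h^x}$ are determined uniquely at every reachable history.

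Then I would dispose of the constrained boundary states. When $J-j=X-x$ the soldier is forced to play $a_1$, when $j=J$ it is forced to play $a_2$, and when $j'=J'$ the attacker is forced to play $b_2$; in each such state the feasible action set is a singleton and the stage solution is trivially unique, so the induction goes through on the whole feasible region. Collecting the per-history uniqueness yields uniqueness of $(\alpha^*,\beta^*)$ and, via the belief substitutions, of the full NE tuple. For the PE the only change is that the stage entries carry the extra terms $\omega_s F_s$ and $\omega_a F_a$; because frustrating the opponent is achieved by exactly the actions (evading, jamming) that already lower the opponent's material payoff, these terms reinforce the monotonicity of (i)--(ii) rather than reverse it, so the stage game stays strictly competitive and the same backward induction delivers a unique PE.

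I expect the main obstacle to be the stage-game analysis: rigorously verifying that assumptions (i) and (ii) really do render every reachable stage game matching-pennies-like throughout the budget-constrained induction, and, for the PE, checking that the kink of the $[\,\cdot\,]^+$ operator in $F_s$ and $F_a$ does not create a flat segment in a best-response correspondence that would admit a continuum of equilibria. Confirming that the indifference conditions possess a unique interior solution at each state, and that the forced-action boundary states splice together consistently with the interior recursion, is where the real work lies.
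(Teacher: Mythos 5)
Your reduction to per-history $2\times 2$ stage games whose entries are immediate payoff plus continuation value, your use of the stated inequalities to rule out pure-strategy equilibria, and your indifference-principle computation of the NE all match the paper's proof (the paper works at a generic history with continuation payoffs folded into $\pi_{n,m}$, $\pi'_{n,m}$; your explicit backward induction over the state $(x,j,j')$ is a cleaner way of saying the same thing). For the conventional dynamic game your argument is essentially the paper's, and it is correct.

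The gap is in the PE part. You treat the psychological stage game as an ordinary strictly competitive game, claiming the frustration terms ``reinforce the monotonicity'' so that ``the same backward induction delivers a unique PE.'' But in the psychological game the stage payoff \emph{entries themselves depend on the equilibrium beliefs}: the soldier's payoff at $(a_1,b_2)$ is $\pi_{1,2}+\omega_s(1-\alpha')(\pi'_{2,2}-\pi'_{1,2})$ and the attacker's at $(a_1,b_1)$ is $\pi'_{1,1}+\omega_a(1-\beta')(\pi_{1,2}-\pi_{1,1})$, where $\alpha',\beta'$ are (by the error-free-belief conditions (29)--(30)) the equilibrium strategies themselves. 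So a PE is a \emph{fixed point} of a belief-parameterized family of games, and uniqueness of the mixed equilibrium of each fixed-belief game does not imply uniqueness of that fixed point. The paper's actual work is precisely here: the two indifference conditions become the coupled equations (31)--(32) in $(\alpha',\beta')$; the paper rewrites them as (33)--(34), shows $\partial\alpha'/\partial\beta'>0$ along the attacker's curve (this step needs the side condition $\pi'_{2,2}-\pi'_{2,1}\ge\theta_1(\pi_{2,1}-\pi_{2,2})$, which your argument never surfaces) and $\partial\alpha'/\partial\beta'<0$ along the soldier's curve, and then uses endpoint orderings at $\beta'=0$ and $\beta'=1$ to conclude the two curves cross exactly once in $[0,1]^2$. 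You correctly flag ``confirming that the indifference conditions possess a unique interior solution'' as where the real work lies, but the argument you offer in its place does not do that work, and without the opposite-monotonicity-plus-endpoints analysis the PE uniqueness claim is unproven.
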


\begin{proof} 
At history ${h}^x\in\mathcal{H}^x$, the soldier chooses its action from $\left\{ {{a}_1},{{a}_2} \right\}$, while the attacker chooses its action from $\left\{ {{b}_1},{{b}_2} \right\}$. We represent the soldier's and attacker's payoffs when the soldier takes action $a_n$ and the attacker takes action $b_m$, where $n,m\in\{1,2\}$, by $\pi_{n,m}$ and $\pi'_{n,m}$, respectively. Here, $\pi_{n,m}$ and $\pi'_{n,m}$ include the instantaneous payoffs the soldier and attacker receive when taking their action pair at history $h^x$ as well as future expected payoffs at the following histories. As such, under each combination of the soldier's and attacker's pure strategies at $h^x$, the soldier's possible payoffs are represented by  ${{\pi}}_{1,1}$, ${{\pi}}_{1,2}$, ${{\pi}}_{2,1}$ and ${{\pi}}_{2,2}$, while the attacker's payoffs are represented by ${{\pi}'}_{1,1}$, ${{\pi}'}_{1,2}$, ${{\pi}'}_{2,1}$ and ${{\pi}'}_{2,2}$.
 Note that, here, we consider that the following inequalities hold: ${{\pi}}_{1,2}>{{\pi}}_{2,2} \ge {{\pi}}_{1,1}$, and ${{\pi}}_{2,1}>{{\pi}}_{2,2}\ge{{\pi}}_{1,1}$. Indeed, $\pi_{1,2}>\pi_{2,2}$ reflects the gain that the soldier receives from communicating with the IoBT network without being jammed by the attacker, while $\pi_{2,2}\geq\pi_{1,1}$ reflects the loss the attacker incurs from attempting to communicate with a jammed IoBT network. In addition, $\pi_{2,1}>\pi_{2,2}$ reflects the gain the soldier will receive in future steps due to the attacker wasting some of its jamming power when the soldier had not attempted to communicate with the IoBT network. Similarly, we also consider the following inequalities to hold:  ${{\pi}'}_{1,1}\ge{{\pi}'}_{2,2}>{{\pi}'}_{1,2}$, ${{\pi}'}_{1,1}\ge{{\pi}'}_{2,2}>{{\pi}'}_{2,1}$. These inequalities correspond to considering that:  i) evaded attacks at history $h^x$ yield higher expected material payoffs for the soldier and lower expected material payoffs for the attacker, at the current and future histories, and ii)  a successful (unjammed) communication at history $h^x$ yields a higher expected material payoff for the soldier and a lower expected material payoff for the attacker, at the current and future histories. 
In this respect, we can compute the psychological payoff of the soldier and the attacker under each combination of these pure strategies. In this regard, we consider $\alpha'$ to be the attacker's belief representing the probability with which the attacker believes that the soldier will choose action $a_1$. In addition, we consider $\beta'$ to be the belief that the soldier has, representing the probability with which the soldier believes that the attacker will choose action $b_1$. Under the correctness of beliefs defined in (29) and (30) of the PE, these probabilities also reflect the second-order beliefs of the players as well as the actual strategies chosen by each of the players.  Starting from the soldier's side, the soldier's psychological payoffs at the strategy pairs $\left\{a_1,b_1\right\}$ and $\left\{a_1,b_2\right\}$ are, respectively, $\pi_{1,1}$ and ${{\pi}}_{1,2} +\omega_s \left(1-\alpha'\right)\left({{\pi}'}_{2,2}-{{\pi}'}_{1,2}\right)$. In addition, psychological payoffs of the soldier at the pure strategy pairs $\left\{a_2,b_1\right\}$ and $\left\{a_2,b_2\right\}$ are, respectively, ${{\pi}}_{2,1} +\omega_s \alpha'\left({{\pi}'}_{1,1} -{{\pi}'}_{2,1} \right)$ and ${{\pi}}_{2,2}$.
 On the other hand, the attacker's psychological payoffs at strategy pairs $\left\{a_1,b_1\right\}$ and $\left\{a_1,b_2\right\}$ are, respectively, ${{\pi}'}_{1,1}+\omega_a\left(1- \beta'\right)\left({{\pi}} _{1,2}-{{\pi}}_{1,1}\right)$ and ${{\pi}'}_{1,2}$. In addition, the attacker's psychological payoffs at strategy pairs $\left\{a_2,b_1\right\}$ and $\left\{a_2,b_2\right\}$ are, respectively, ${{\pi}'}_{2,1}$ and ${{\pi}'}_{2,2} +\omega_a \beta'\left({{\pi}}_{2,1} -{{\pi}}_{2,2}\right)$.

 In the conventional dynamic game, in which the frustrations of the soldier and attacker are not considered in their opponent's utility functions, we denote the soldier's and attacker's strategies by $\left[\alpha,1-\alpha\right]$ and $\left[\beta,1-\beta\right]$, respectively.  By using the indifference principle,  we can compute the NE strategy of the soldier, which results in $\alpha =\frac{{\pi '_{2,2}- \pi '_{2,1}}}{{\pi '_{1,1} + \pi '_{2,2} - \pi '_{1,2} - \pi'_{2,1}}} $, and the NE strategy of the attacker, which results in $\beta=\frac{{\pi_{2,2} - \pi_{1,2}}}{{\pi_{1,1} + \pi_{2,2}- \pi_{1,2} - \pi_{2,1}}}$.
  Here, we note that this computed NE is unique since it can be shown that no NE exists in pure strategies, under our considered set of inequalities defined at the start of the proof, and the solution of the equations resulting from the indifference principle results in unique mixed-strategies $\alpha$ and $\beta$.

   Now, considering the psychological game, we denote the soldier's and the attacker's strategies at the PE by $\left[\alpha',1-\alpha'\right]$ and $\left[\beta',1-\beta'\right]$, respectively.  By using the indifference principle, with the soldier and the attacker holding correct (i.e. error-free) beliefs on one another, we can compute the PE strategies as follows:
  \begin{equation}\label{eq:31}
  \begin{split}
  \alpha '& =\\
     &\frac{{{\omega _a}\beta '\left( {\pi_{2,1} - \pi_{2,2}} \right) + \left( {\pi '_{2,2} - \pi '_{2,1}} \right)}}{{D' + {\omega _a}\left[ { \beta '\left(\pi_{2,1}-\pi_{2,2}\right) + \left( {1 - \beta '} \right)\left(\pi_{1,2} - \pi_{1,1}\right)} \right]}},
   \end{split}
   \end{equation}
 \begin{equation}\label{eq:32}
  \begin{split}
 \beta ' &= \\
   &\frac{{{\omega _s}\left( {1 - \alpha '} \right)\left( {{{\pi '}_{2,2}} - {{\pi '}_{1,2}}} \right) + \left( {{\pi _{1,2}} - {\pi _{2,2}}} \right)}}{{D + {\omega _s}\left[ {\left( {1 - \alpha '} \right)\left( {{{\pi '}_{2,2}} - {{\pi '}_{1,2}}} \right) + \alpha '\left({{\pi '}_{1,1}}- {{\pi '}_{2,1}}\right)} \right]}}.
   \end{split}
   \end{equation}
where $D=\pi_{1,2}+\pi_{2,1}-\pi_{1,1}-\pi_{2,2}$, $D'=\pi'_{1,1}+\pi'_{2,2}-\pi'_{1,2}-\pi'_{2,1}$. Note that, in (\ref{eq:31}), when $\beta'=0$, $\alpha'<\alpha$, when $\beta'=1$, $\alpha'>\alpha$. At the same time, in (\ref{eq:32}), when $\beta'=0$, $\alpha'=1+\frac{\pi_{1,2}-\pi_{2,2}}{\omega_s(\pi'_{2,2}-\pi'_{1,2})}>\alpha$, when $\beta'=1$, $\alpha'=\frac{\pi_{1,1}-\pi_{2,1}}{\omega_s\left(\pi'_{1,1}-\pi'_{2,1}\right)}<\alpha$. 
Hereinafter, we rewrite (\ref{eq:31}) as:

  \begin{equation}\label{eq:33}
  \setlength{\abovedisplayskip}{0 pt}
\setlength{\belowdisplayskip}{3 pt}
  \begin{split}
  &\alpha ' =\frac{{F_3\beta ' + F_4}}{{F_1\beta' +F_2}},
   \end{split}
   \end{equation}
where $F_1=\omega_a\left(\pi_{2,1}+\pi_{1,1}-\pi_{1,2}-\pi_{2,2}\right)$, $F_2=\pi '_{2,2} - \pi '_{2,1}+\pi '_{1,1} - \pi '_{1,2}+\omega_a\left(\pi_{1,2}-\pi_{1,1}\right)>0$, $F_3=\omega_a\left(\pi_{2,1}-\pi_{2,2}\right)>0$, $F_4=\pi '_{2,2} - \pi '_{2,1}>0$. Meanwhile, in (\ref{eq:33}), $\frac{\partial{\alpha'}}{\partial{\beta'}}=\frac{F_2F_3-F_1F_4}{\left(F_1\beta'+F_2\right)^2}$. Note that, when $\pi'_{2,2}-\pi'_{2,1}\ge\theta_1\left(\pi_{2,1}-\pi_{2,2}\right)$, we can prove that $F_2F_3-F_1F_4>0$, which implies that, in (\ref{eq:31}), $\alpha'$ increases with an increase in $\beta'\in[0,1]$.

Meanwhile, we rewrite equation (\ref{eq:32}) with:
     \begin{equation}\label{eq:34}
\setlength{\belowdisplayskip}{3 pt}%
  \begin{split}
  &\alpha ' = \frac{{F'_3\beta ' + F'_4}}{{F'_1\beta' +F'_2}},
   \end{split}
   \end{equation}
    where $F_1'=\omega_s\left(\pi'_{1,2}+\pi'_{1,1}-\pi'_{2,1}-\pi'_{2,2}\right)$, $F'_2=\omega_s\left(\pi'_{2,2}-\pi'_{1,2}\right)>0$, $F'_3=-D-\omega_s\left(\pi'_{2,2}-\pi'_{1,2}\right)<0$, $F'_4={\pi _{1,2}}-{\pi _{2,2}}+\omega_s\left(\pi '_{2,2} - \pi '_{1,2}\right)>0$. In (\ref{eq:34}), $\frac{\partial{\alpha'}}{\partial{\beta'}}=\frac{F'_2F'_3-F'_1F'_4}{\left(F'_1\beta'+F'_2\right)^2}$. Here, $F'_2F'_3-F'_1F'_4\le\omega_s\left(\pi'_{2,2}-\pi'_{1,2}\right)\left(\pi_{1,1}-\pi_{2,1}\right)<0$.  
    Thus $\frac{\partial{\alpha'}}{\partial{\beta'}}<0$, which implies that, in (\ref{eq:32}), $\alpha'$ decreases with an increase in $\beta'\in[0,1]$.

In conclusion, when $\beta'=0$, $\alpha'$ in (\ref{eq:31}) is smaller than $\alpha'$ in (\ref{eq:32}), while when $\beta'=1$, $\alpha'$ in (\ref{eq:31}) is larger than $\alpha'$ in (\ref{eq:32}). In (\ref{eq:31}), $\alpha'$ is strictly increasing in $\beta'$, while, in (\ref{eq:32}), $\alpha'$ is strictly decreasing in $\beta'$. As such, for $0\le\beta'\le1$, (\ref{eq:31}) and (\ref{eq:32}) has $1$ intersection point. This implies that the solution obtained from the indifference principle is unique and there is a unique PE in mixed strategies. In addition, given the inequalities stated at the beginning of the proof, it can be readily shown that no PE exists in pure strategies. Therefore, under the considered set of inequalities, the PE of the game is unique. This completes the proof.
 \end{proof}
The incorporation of the opponent's beliefs in the objective function of each player, and the possible willingness of each player to not only meet its own objective but to frustrate the opponent, introduce significant modifications to the equilibrium strategies of each player  as shown in Theorem 2. 

 \begin{theorem}\label{theorem2}
\emph{In the psychological game, at the PE, the attacker is more likely to target the IoBT device having the best channel conditions as compared to the NE of the traditional dynamic game.} 

\end{theorem}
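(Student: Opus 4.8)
The plan is to reduce the statement to the single-stage $2\times2$ analysis already constructed in the proof of Theorem~\ref{theorem1} and to track how the attacker's equilibrium probability of jamming, i.e. the probability of action $b_1$, changes as we pass from the NE to the PE at a fixed history $h^x$. Writing $\beta$ for the NE jamming probability and $\beta'$ for the PE jamming probability, and recalling from Theorem~\ref{theorem1} that $\beta=\frac{\pi_{1,2}-\pi_{2,2}}{D}$ with $D=\pi_{1,2}+\pi_{2,1}-\pi_{1,1}-\pi_{2,2}>0$, the goal becomes to show $\beta'>\beta$ for the $h^x$ associated with the best channel. I would not attempt to solve the coupled fixed point (\ref{eq:31})--(\ref{eq:32}) in closed form; instead I would compare, at the abscissa $\beta'=\beta$, the attacker's indifference curve (\ref{eq:31}) and the soldier's indifference curve (\ref{eq:32}), and invoke the monotonicity and uniqueness of the crossing already proved for Theorem~\ref{theorem1}.

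First I would isolate the stage-game quantity that carries the channel information, namely the soldier's frustration gap $g:=\pi_{1,2}-\pi_{1,1}$, the drop in the soldier's material payoff caused by a jam while it is communicating. Since $\pi_{1,2}$ and $\pi_{1,1}$ are affine in $-\tau(a_1,b_2)$ and $-\tau(a_1,b_1)$ respectively, we have $g\propto \tau(a_1,b_1)-\tau(a_1,b_2)$, the jamming-induced increase in retransmission delay. Using the channel model (1)--(5) and the monotonicity of the retransmission count $k$ in $q_x$, I would argue that among all devices this delay gap, and hence $g$, is largest at the device with the best (highest unjammed SINR) channel: there the soldier expects the smallest delay $\tau(a_1,b_2)$ and therefore suffers the largest degradation when jammed. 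This is the modeling step that converts ``best channel'' into the analytic condition ``largest $g$''.

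Next I would evaluate (\ref{eq:31}) and (\ref{eq:32}) at $\beta'=\beta$. The key observation is that $g$ enters (\ref{eq:31}) explicitly only through the term $\omega_a(1-\beta')(\pi_{1,2}-\pi_{1,1})=\omega_a(1-\beta')g$ in the denominator, so the ordinate $\alpha'$ returned by (\ref{eq:31}) at $\beta'=\beta$ is strictly decreasing in $g$ and lies strictly below the NE value $\alpha$, tending toward $0$ as $g$ grows. Meanwhile, the curve (\ref{eq:32}) evaluated at $\beta'=\beta$ depends on $g$ only through $\beta$ itself, and since $\beta=\frac{\pi_{1,2}-\pi_{2,2}}{g+(\pi_{2,1}-\pi_{2,2})}$ decreases as $g$ increases, the corresponding ordinate on (\ref{eq:32}) increases (it is a strictly decreasing function of $\beta$, as one checks from the form (34)) and stays bounded away from $0$. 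Hence the separation between the soldier's curve and the attacker's curve at $\beta'=\beta$ is positive and grows with $g$. Because Theorem~\ref{theorem1} established that (\ref{eq:31}) is strictly increasing and (\ref{eq:32}) strictly decreasing in $\beta'$ with a single crossing, this ordering forces that crossing, i.e. the PE, to lie strictly to the right, giving $\beta'>\beta$. Assembling the two steps, the best-channel device is exactly the one with the largest frustration gap $g$, and for that device $\beta'>\beta$: the attacker jams it with strictly higher probability at the PE than at the NE, which is the assertion of the theorem and reflects that the psychological term rewards jamming precisely where the soldier's disappointment is greatest.

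I expect the main obstacle to be the simultaneous dependence of \emph{both} indifference curves, and of the benchmark $\beta$, on $g$: one cannot freeze one side while perturbing the channel. The clean route is the curve-separation comparison above, which only needs the sign of $S'(\beta)-A'(\beta)$ together with the monotonicity and uniqueness inherited from Theorem~\ref{theorem1}, rather than an explicit solution of the coupled equations; the monotone dependence of this separation on $g$ then yields a threshold $\bar g$ beyond which $\beta'>\beta$, met by the best-channel device in the operating regime. A secondary difficulty is making the delay-to-$g$ comparison rigorous, since ``best channel'' must be translated into a statement about $\tau(a_1,b_1)-\tau(a_1,b_2)$ through the ceiling in the retransmission count $k$, and one should keep the remaining channel-dependent payoff differences (which are bounded by the cap $\widehat k\,t$) under control so that the $g$-term indeed dominates.
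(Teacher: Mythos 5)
Your overall skeleton matches the paper's: work in the $2\times 2$ stage game constructed for Theorem \ref{theorem1}, evaluate the two indifference curves at the abscissa $\beta'=\beta$, and use the established monotonicity (curve (\ref{eq:31}) increasing in $\beta'$, curve (\ref{eq:32}) decreasing) plus uniqueness of the crossing to force the PE to sit at $\beta'>\beta$. However, the way you execute the comparison at $\beta'=\beta$ has genuine gaps. First, your claim that the value of (\ref{eq:32}) at $\beta'=\beta$ ``depends on $g$ only through $\beta$ itself'' is false: substituting $\beta'=\beta=\frac{\pi_{1,2}-\pi_{2,2}}{D}$ into (\ref{eq:32}) gives (\ref{eq:35}), i.e. $\alpha'=\frac{\left(\pi_{2,1}-\pi_{1,1}\right)\left(\pi'_{2,2}-\pi'_{1,2}\right)}{\left(\pi_{2,1}-\pi_{1,1}\right)\left(\pi'_{2,2}-\pi'_{1,2}\right)+\left(\pi_{1,2}-\pi_{2,2}\right)\left(\pi'_{1,1}-\pi'_{2,1}\right)}$, which involves $\pi_{1,2}$ and $\pi_{1,1}$ directly (through $\pi_{1,2}-\pi_{2,2}$ and $\pi_{2,1}-\pi_{1,1}$, hence through your $g$), and the direction of its dependence on $g$ is ambiguous: raising $\pi_{1,2}$ lowers it, lowering $\pi_{1,1}$ raises it. Second, your claim that (\ref{eq:31}) at $\beta'=\beta$ lies ``strictly below the NE value $\alpha$'' and tends to $0$ as $g$ grows is not right either: for small $g$ that value exceeds $\alpha$ (the added numerator and denominator terms become equal, so the ratio of increments tends to $1>\alpha$), and since $\beta$ itself moves with $g$, if $g$ grows through $\pi_{1,2}$ then $(1-\beta)g\to\pi_{2,1}-\pi_{1,1}$ stays bounded and the denominator of (\ref{eq:31}) does not blow up. Moreover, the material payoffs here are normalized and bounded, so any asymptotic or threshold-in-$g$ argument cannot deliver the theorem as stated; at best it proves a different, conditional statement.

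The two missing ideas are the correct benchmark for the comparison and the correct translation of ``best channel.'' The paper compares both curves to the fixed value $\tfrac{1}{2}$: under the weight conditions $\pi'_{1,1}-\pi'_{2,1}\le\theta_1\left(\pi_{2,1}-\pi_{1,1}\right)$ and $\pi'_{2,2}-\pi'_{1,2}\ge\theta_1\left(\pi_{1,2}-\pi_{2,2}\right)$, the value (\ref{eq:35}) of the soldier's curve at $\beta'=\beta$ is $\ge\tfrac{1}{2}$, while the value of (\ref{eq:31}) at $\beta'=\beta$ is $<\tfrac{1}{2}$ exactly when $\pi_{2,1}<\pi_{1,2}$; these two bounds, combined with the monotonicity of the curves, give $\beta'>\beta$. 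Note that the channel hypothesis the paper actually uses is $\pi_{2,1}<\pi_{1,2}$ --- the payoff of an unjammed communication now exceeds the continuation payoff of skipping while the attacker wastes power --- which is an across-devices (current versus future) comparison; your condition ``largest jamming-induced delay gap $g=\pi_{1,2}-\pi_{1,1}$'' is a within-device quantity, and your bridge from ``best channel'' to ``largest $g$'' (through the ceiling in $k$ and the cap $\widehat{k}$) is neither established nor equivalent to the paper's hypothesis. Without these two elements your separation argument cannot close, and you also never invoke the $\theta_1$ conditions that are indispensable for pinning the soldier's curve above $\tfrac{1}{2}$.
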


\begin{proof} 
  In (\ref{eq:32}), when $\beta'=\beta$, we can get:
       \begin{equation}\label{eq:35}
\setlength{\belowdisplayskip}{3 pt}%
  \begin{split}
 &\alpha'=\\
 &\small{\frac{\omega_s\left(\pi'_{2,2}-\pi'_{1,2}\right)\left(\pi_{2,1}-\pi_{1,1}\right)}{\omega_s\left[\left(\pi'_{2,2}-\pi'_{1,2}\right)\left(\pi_{2,1}-\pi_{1,1}\right)+\left(\pi'_{1,1}-\pi'_{2,1}\right)\left(\pi_{1,2}-\pi_{2,2}\right)\right]}}
   \end{split}
   \end{equation}
 such that $\alpha'\ge\frac{1}{2}$, since $\pi'_{1,1}-\pi'_{2,1}\le\theta_1\left(\pi_{2,1}-\pi_{1,1}\right)$, $\pi'_{2,2}-\pi'_{1,2}\ge\theta_1\left(\pi_{1,2}-\pi_{2,2}\right)$.
In (\ref{eq:31}), when $\beta'=\beta$, $\alpha'<\frac{1}{2}$ if $\pi_{2,1}<{\pi_{1,2}}$. As such, if $\pi_{2,1}<{\pi_{1,2}}$ (i.e. if the current IoBT device exhibits the best channel as compared to all future IoBT devices), $\beta'>\beta$, at the intersection of (\ref{eq:31}) and (\ref{eq:32}), as shown in Fig. \ref{figure2}. Indeed, $\pi_{1,2}>\pi_{2,1}$ reflects the gain that the soldier receives from communicating with the current IoBT device is larger than the gain it may potentially receive if it successfully communicates with the IoBT network in future steps. Thus, when there is no remaining IoBT devices in future steps, which have a better channel  as compared to the current device, the attacker will be more likely to attack the current device. This completes
the proof.
 \begin{figure}[!t]
  \begin{center}
   \vspace{0cm}
    \includegraphics[width=7.5cm]{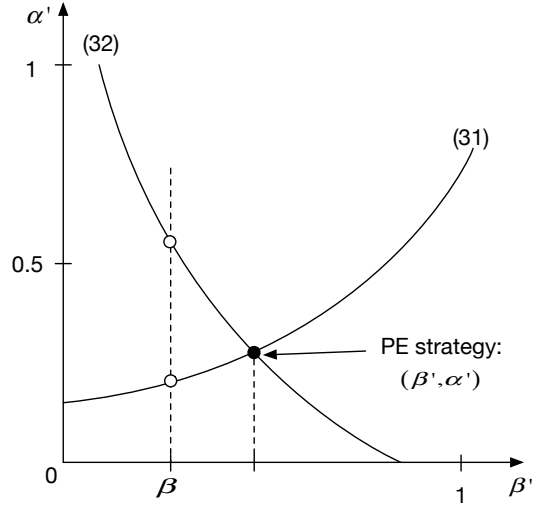}
    \vspace{-0.2cm}
    \caption{\label{figure2} $\alpha'$ versus $\beta'$.}
  \end{center}
\end{figure}
 \end{proof}
Hence, Theorem 2 shows the effect that the incorporation of beliefs in each player's objective function (as in the proposed psychological game) can have on their equilibrium strategies. In particular, even when the beliefs are error-free, the fact that the beliefs have a direct effect on how the outcome of the game is assessed by each player has a direct impact on the chosen equilibrium strategies. Indeed, Theorem 2 shows that, for the attacker, incorporating a belief over the soldier's strategy and a belief over the soldier's belief in its objective function allows the attacker to modify its equilibrium strategy in order to maximize the soldier's frustration.

 In summary, the formulated psychological game enables analysis of the soldier's and attacker's psychological intention to frustrate each other  and allows studying the effect of such a psychological decision making aspect on the chosen strategies of each player.  As can be seen from Definition 2 and conditions (29) and (30), holding correct beliefs is necessary to reach a PE of the game. This highlights the importance of the beliefs and the effect that they have on the chosen soldier and attacker strategies. In practical applications, in which each player may not have the ability to analytically characterize a set of error-free beliefs, a learning algorithm could be applied to numerically synthesize these beliefs \cite{chen2016caching,alpaydin2009introduction}. To this end, a Bayesian updating based algorithm is proposed next, which enables a numerical computation of first-order and second-order beliefs of each player, which as a result, allows computation of equilibrium strategies of the psychological game.
\section{Learning to Be Rational in the Psychological Game} \label{se:3}  
In the studied IoBT scenario, the soldier and attacker form first-order and second-order beliefs, which are used to compute their optimal (i.e. rational) strategy. As such, as shown in Definition 2, the PE strategies require the players to form correct (i.e. error-free) beliefs. Forming such beliefs analytically is a typically complex task especially when the size of the battlefield grows, which leads to having a significantly large number of possible histories in the game. Hence, rather than relying on complex analytical derivations, the soldier and attacker can rely on numerical techniques and observations to form such beliefs, and as a result, choose their strategies. A powerful tool which can be used to form the players' beliefs in our proposed psychological game is \emph{Bayesian updating} \cite{jaffray1992bayesian}, which enables the use of observations to form consistent beliefs. Hence, next, we develop a  Bayesian updating based approach to solve the proposed psychological game, by first predicting the players' future strategies and beliefs.

By using Bayesian updating, the attacker and the soldier find, at each history $h^x$ in the game, find the posterior probabilities as follows:
 \begin{equation}\label{eq:36}
\Pr \left( {{a_n}\left| {{{h}}^x} \right.} \right) = \frac{{\Pr \left( {{{{h}}^x}\left| {{a_n}} \right.} \right)\Pr \left( {{a_n}} \right)}}{{\Pr \left( {{{h}}^x} \right)}},
\end{equation}  
 \begin{equation}\label{eq:37}
\Pr \left( {{b_m}\left| {{{h}}^x} \right.} \right) = \frac{{\Pr \left( {{{{h}}^x}\left| {{b_m}} \right.} \right)\Pr \left( {{b_m}} \right)}}{{\Pr \left( {{{{h}}^x}} \right)}},
\end{equation}  
where $\Pr \left( {{a_n}\left| {{{h}}^x} \right.} \right) $ represents the probability of the soldier choosing action $a_n$ at history ${{{h}}^x}$, while $\Pr \left( {{b_m}\left| {{{h}}^x} \right.} \right)$ is the probability of the attacker choosing action ${b_m}$ at history ${{{h}}^x}$. $\Pr \left( {{{{h}}^x}\left| {{a_n}} \right.} \right)$ represents the probability that the current history is ${{{h}}^x}$ when the soldier chooses action $a_n$ at step $x$. $\Pr \left( {{{{h}}^x}\left| {{b_m}} \right.} \right)$ represents the probability that the current history is ${{{h}}^x}$ when the soldier chooses action $b_m$ at step $x$. $\Pr \left( {{a_n}} \right)$ and $\Pr \left( {{b_m}} \right)$, respectively, represent the probabilities that action $a_n$ or $b_m$ is chosen at step $x$. ${{\Pr \left( {{{h}}^x} \right)}}$ represents the probability that history ${{{h}}^x}$  is reached at step $x$. 

Under Bayesian updating, the soldier and the attacker build their first-order beliefs on their opponent's strategy, under each history $h^x$, according to (\ref{eq:36}) and (\ref{eq:37}).  Meanwhile, based on Definition 2, the soldier and attacker's second-order beliefs are consistent with their own strategies. Based on their beliefs, the soldier and attacker determine their optimal strategies that maximize their belief-dependent utilities in (22) and (26).  The specific process of our Bayesian updating based solution is represented in Algorithm 1. Note that, all of the aforementioned probabilities will be updated through the repetition of the game\footnote{In practical IoBT scenarios, there will be more than one soldier working on a same mission. The soldier and the attacker's beliefs, which are consistent with the posterior probabilities defined in (\ref{eq:36}) and (\ref{eq:37}), are updated through different soldiers' accomplishment of the same mission.}. At the beginning of algorithm, the soldier and attacker assume $\Pr \left( {{a_n}} \right)=\Pr \left( {{b_m}} \right)=\frac{1}{{2}}$ at each step $x$, as there is no reason for them to assume that their opponents has any preference on the choice of their actions. Similarly, $\Pr \left( {{{{h}}^x}\left| {{a_n}} \right.} \right)=\frac{1}{V_{a_n,{{{h}}^x}}}$, $\Pr \left( {{{{h}}^x}\left| {{b_m}} \right.} \right)=\frac{1}{V_{b_m,{{{h}}^x}}}$ are assumed at the beginning of the algorithm, with ${V_{a_n,{{{h}}^x}}}$, ${V_{b_m,{{{h}}^x}}}$ representing the total number of times that when $a_n$ or $b_m$ are respectively chosen, they are chosen from history $h^x$.  We also consider ${{\Pr \left( {{{h}}^x} \right)}}=1$ when history $h^x$ is reached at the first iteration of the game. 
%
 \begin{algorithm}[t]\footnotesize
\caption{Bayesian updating solution for the dynamic psychological game }   
\label{alg:Framwork}   
\setlength{\abovecaptionskip}{-15pt} 
\setlength{\belowcaptionskip}{-15pt}
\begin{algorithmic} [1] 
\REQUIRE The set of IoBT devices ${\mathcal{X}}$ in the battlefield, the number of required communication links between the soldier and IoBT devices $J$, and the power limitation of the attacker $E$. \\ 
\vspace{2pt}  
\ENSURE Initialize belief of the soldier and attacker.\\

\vspace{2pt}  
\FOR {IoBT device $1$ to $X$} 
\vspace{2pt}  
\STATE Update ${{\Pr \left( {{{h}}^x} \right)}}$.
\vspace{2pt}  
\STATE Calculate $\Pr \left( {{a_n}\left| {{{h}}^x} \right.} \right)$, $\Pr \left( {{b_m}\left| {{{h}}^x} \right.} \right)$, with ${\delta^{2*}_{h^x} }\left( n \right) ={\rho_{h^x} ^{1*}}\left(n \right) = \Pr \left( {{a}_n\left| {{{h}}^x} \right.} \right)$, ${\rho_{h^x} ^{2*}}\left(m \right) ={\delta^{1*}_{h^x}}\left( m \right) = \Pr \left( {b_m\left| {{{h}}^x} \right.} \right)$.
\vspace{2pt}  
\FOR {the attacker}
\vspace{2pt}  
\FOR {all action ${{b}_m \in \mathcal{B}_{ {{h}}^x}}$} 
\vspace{2pt}  
\STATE Estimate the soldier expected material payoff based on ${\rho}^{2*}_{h^x}$.
\vspace{2pt}  
\ENDFOR
\vspace{2pt}  
\STATE Find $b^* \left( {{h}}^x\right)= \mathop {\arg \max }\limits_{{b}_m \in \mathcal{B}_{ {{h}}^x}} u'$.
\vspace{2pt}  
\STATE Update ${{\Pr \left( b_m \right)}}$, ${\Pr \left( { {{h}}^x\left| {{b_m}} \right.} \right)}$, for all ${{a}_n \in \mathcal{A}_{ {{h}}^x}}$.
\vspace{2pt} 
\ENDFOR 
\vspace{2pt}  
\FOR {the soldier}
\vspace{2pt}  
\FOR {all action $a_n \in \mathcal{A}_{ {{h}}^x}$} 
\vspace{2pt}  
\STATE Estimate the attacker's expected material payoff based on ${\delta}^{2*}_{h^x}$.
\ENDFOR
\vspace{2pt}  
\STATE Find ${a}^*\left( {{h}}^x\right) = \mathop {\arg \max }\limits_{{a}_n \in \mathcal{A}_{ {{h}}^x}} u$.
\vspace{2pt}  
\STATE Update ${{\Pr \left( a_n \right)}}$, ${\Pr \left( {{{{h}}^x}\left| {{a_n}} \right.} \right)}$, for all $b_m \in \mathcal{B}_{{{h}}^x}$.
\vspace{2pt}  
\ENDFOR 
\vspace{2pt}  
\ENDFOR  
\end{algorithmic}
\end{algorithm} 

By extending the results in \cite{kalai1993rational} to our formulated psychological game, when the soldier and the attacker maximize their payoffs based on their beliefs learnt from Bayesian updating, their beliefs will always converge to a value that $\epsilon$-likes their opponents' strategies and beliefs. Here, a first-order belief at history $h^x$, $\boldsymbol{\delta}^1_{{h}^x}  = \left[ {{\delta ^1_{{h}^x}}\left( 1 \right),{\delta^1_{{h}^x}}\left( 2 \right)} \right]$, is said to \emph{$\epsilon$-like} a player's strategy (i.e. $\boldsymbol\beta_{{h}^x}=\left[\beta_1,\beta_2\right]$) when there exists an $\epsilon>0$, such that (i) $\sum\limits_{n = 1}^2\delta ^1_{{h}^x}\left( n \right)$ and $\sum\limits_{n = 1}^2\beta_n$ are greater than $1-\epsilon$; (ii) $\left(1-\epsilon\right)\beta_n\le\delta ^1_{{h}^x}\left( n \right)\le\left(1+\epsilon\right)\beta_n$, for $n \in \left\{1,2\right\}$. During the Bayesian updating process, the soldier and the attacker predict their opponent's strategies based on the sequence of actions that the soldier and attacker had taken during the updating process. 
In this regard, each player, including the soldier and the attacker, optimizes its utilities based on the beliefs learned from the actual strategies played in the game. Hence, learning a belief system over the strategy the opponent takes at a certain history requires this history to be reached during the learning process. Hence, if the sequence of actions taken by the soldier and attacker do not lead to a certain history to be reached, the player's cannot use previous observations to build a belief system over that particular history. As such, based on the Bayesian updating and optimal strategy selection, the soldier and the attacker are guaranteed to reach an $\epsilon$-like psychological self-confirming equilibrium (PSCE) defined as follow\cite{fudenberg1995learning}.
 
\begin{definition}\emph{The $\epsilon$-like \emph{PSCE} of the formulated dynamic psychological game is  defined as $\left({\alpha}^*, {\beta}^*, {\delta}^{1*}, {\delta}^{2*}, {\rho}^{1*}, {\rho}^{2*}\right)$,  where ${\alpha}^*$, and ${\beta}^*$ are rational; 
while ${\rho}^{1*}$, ${\delta}^{1*}$, ${\rho}^{2*}$, and ${\delta}^{2*}$ respectively $\epsilon$-like $\alpha^*$, $\beta^*$, ${\delta}^{1*}$, ${\rho}^{1*}$, \iffalse are consistent\fi for each history $h^x$ such that $\Pr \left( {{{\alpha}^*, {\beta}^*}\left| {{{h}}^x} \right.} \right)>0$.}
\end{definition}

As such, in an  $\epsilon$-like PSCE, the soldier and the attacker's beliefs, which are updated based on previous actions that have been taken in the battlefield, $\epsilon$-like the error-free beliefs, defined in the PE of this game. 
Hence, using the Bayesian updating process, the soldier and the attacker can dynamically predict their opponent's strategy. As such, rational strategies that maximize each player's payoff can be reached upon the convergence of the Bayesian updating algorithm.  

 \vspace{0.2cm}
\section{Simulation Results and Analysis} 
For our simulations, we consider a battlefield in which multiple IoBT devices are randomly distributed along the soldier's path. The channel gain between the soldier and each IoBT device follows a Rayleigh distribution with unit variance. 
The parameters used in the simulations are listed in Table  \uppercase\expandafter{\romannumeral1}. The \emph{Bayesian updating based PSCE} results, denoted by BU herinafter, are compared to the NE and PE results. 

\begin{table}
  \newcommand{\tabincell}[2]{\begin{tabular}{@{}#1@{}}#2\end{tabular}}
\renewcommand\arraystretch{1}
 \caption{
    \vspace*{-0.2em}SYSTEM PARAMETERS\cite{pal2011performance}}\vspace*{-1em}
\centering  
\begin{tabular}{|c|c|c|c|}
\hline
\textbf{Parameter} & \textbf{Value} & \textbf{Parameter} & \textbf{Value} \\
\hline
$P_S $ & 20 dBm & $P_A$ & 20 dBm\\
\hline
$J=J'$ & 1 & $\Delta$ & 80 ms \\
\hline
$\theta_1= \theta_2$ & 0.5 & $ \omega_s=\omega_a$ & 0.5\\
\hline
$I_x$ & 20 MHz & $ \sigma ^2$ & -95 dBm\\

\hline
\end{tabular}
\end{table}

\begin{figure}[!t]
  \begin{center}
    \includegraphics[width=9cm]{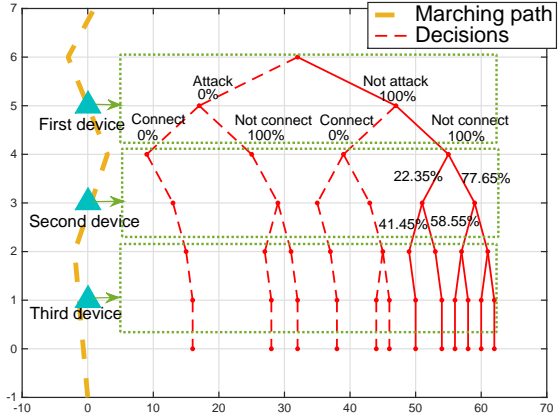}
    \caption{\label{figure3} The soldier and attacker's strategy in the battlefield.}
  \end{center}\vspace{-0.3cm}
\end{figure}

Fig. \ref{figure3} shows the way in which the soldier and attacker make decisions in an IoBT network with 3 devices, using the Bayesian updating algorithm. Fig. \ref{figure3} shows that, at each history in the game, the soldier (attacker) makes decisions on whether to communicate with (attack) the current device or not, based on their prediction on their opponent's strategies and beliefs. The soldier's belief on the attacker is updated based on the Bayesian updating algorithm. For example, at a certain history $h^x$ in the game, an IoBT device is attacked  $22.35\%\times M$ times in $M$ iterations of the Bayesian updating algorithm, where history $h^x$ is always reached at step $x$. As such, when history $h^x$ is reached in the $M+1$-th iteration of the Bayesian updating algorithm, the soldier will believe that the attacker will attack the current IoBT device with a probability $22.35\%$. Note that, the soldier updates its beliefs based on the  actual strategies played in the game. As the number of soldiers taking the same mission is limited, the opportunity with which the soldier updates its beliefs is limited. Hence, the soldier's beliefs in the battlefield  may not be error-free since not enough previous observations are available for the generated beliefs to $\epsilon$-like the correct strategies. In the following simulation results, the effect of the players' non-error-free beliefs will also be studied,  in addition to a complete numerical analysis investigating the NE, PE, and BU of the proposed games as well as studying the effects that the various game parameters have on these equilibria. For the following simulations, we consider 5 IoBT devices.

\begin{figure}[!t]
  \begin{center}
   \vspace{0cm}
    \includegraphics[width=8cm]{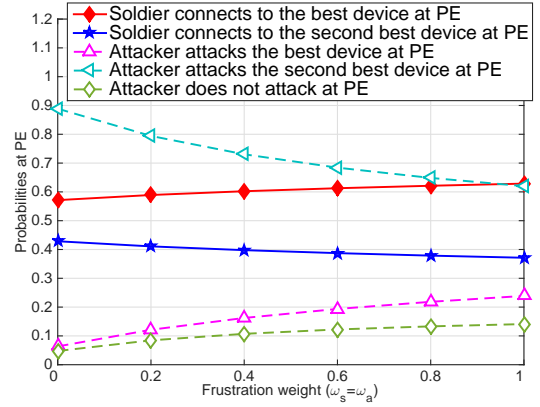}
    \vspace{-0.2cm}
    \caption{\label{figure4} The soldier and the attacker's PE strategies as the frustration weights vary.}
  \end{center}\vspace{-0.3cm}
\end{figure}

Fig. \ref{figure4} shows the effect that the weights of frustration of each player (i.e. $\omega_s$ and $\omega_a$) -- which reflects the importance each player assigns on frustrating the opponent -- have on the chosen PE strategies. In this regard, Fig. \ref{figure4} shows that, as the weight $\omega_a$ increases, the probability that the attacker attacks the best IoBT device  increases. Here, the ranking of best, second best device correspond to the channel quality of that device. Also, the probability that the attacker attacks the IoBT devices decreases with the increase of $\omega_a$, signifying that the attacker's likelihood of launching any jamming attack decreases. This is due to the fact that, attempting to frustrate the soldier, the attacker's strategy can be designed to increase the likelihood of compromising the best IoBT device along the path.
Meanwhile, since the attacker also aims at minimizing the total jamming power consumed, its equilibrium strategy will tend towards attacking the IoBT network with a lower probability. In addition, Fig. \ref{figure4} also shows that, as the weight $\omega_s$ increases, the probability that the soldier communicates with the IoBT device with best channel quality increases, as likelihood of occurrence of attacks decreases. Note that, when $\omega_s = 0$ and $\omega_a = 0$, the attacker's frustration is not considered in the soldier's payoff, while the soldier's frustration is not considered in the attacker's payoff, which corresponds to the NE of the conventional dynamic game introduced in Section II-E.

\begin{figure}
  \centering
  \subfigure[Convergence of the soldier's BU strategies]{
    \label{fig:subfig5:a} 
    \includegraphics[width=8cm]{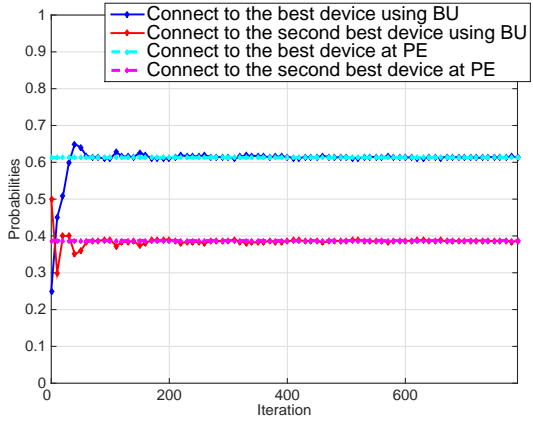}}
  \hspace{1in}
  \subfigure[Convergence of the attacker's BU strategies]{
    \label{fig:subfig5:b} 
    \includegraphics[width=8cm]{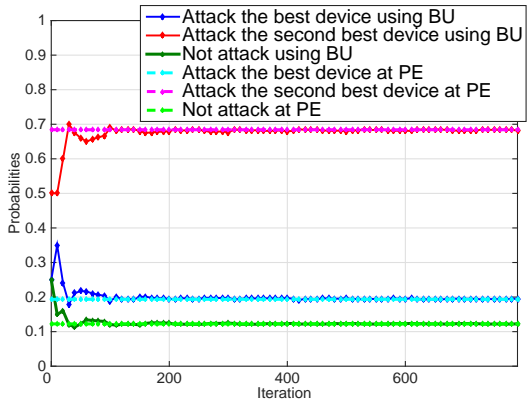}}
  \caption {Convergence of the BU strategies}
 \label{figure5}
 \vspace{-0.3cm} 
\end{figure}

Fig. \ref{figure5} shows the convergence of the soldier's and attacker's strategies using the proposed proposed Bayesian updating algorithm. Fig. \ref{fig:subfig5:a} and Fig. \ref{fig:subfig5:b} show that, as the number of iteration increases, the soldier and attacker's strategies $\epsilon$-like their PE counterparts, upon convergence. In the results shown in Fig. \ref{fig:subfig5:a} and Fig. \ref{fig:subfig5:b}, Bayesian updating algorithm approximately requires $210$ iterations to reach convergence. 
\begin{figure}[!t]
  \begin{center}
   \vspace{0cm}
    \includegraphics[width=8cm]{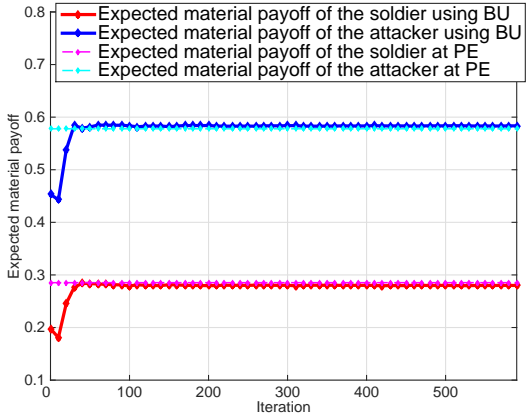}
    \caption{\label{figure6} Convergence of the BU expected material payoffs.}
  \end{center}
\end{figure}
Fig. \ref{figure6} shows the convergence of the BU expected material payoffs of the soldier and attacker, as their BU beliefs as well as their BU strategies in Fig. \ref{figure5} converge. Fig. \ref{figure5} also shows that, when the player's belief are not error-free, their strategies will not be rational, such that the player's expected material payoff could be less than their expected material payoffs at the PE. 

\begin{figure}[!t]
  \begin{center}
   \vspace{0cm}
    \includegraphics[width=8cm]{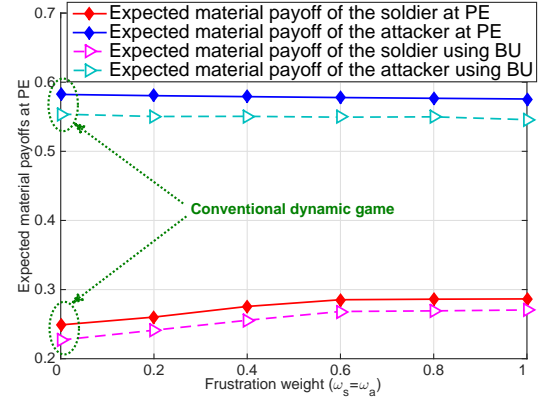}
    \vspace{-0.2cm}
    \caption{\label{figure7} The soldier's and the attacker's expected material payoffs as the frustration weights vary (BU results from 10 iterations in Algorithm 1).}
  \end{center}\vspace{-0.5cm}
\end{figure}
Fig. \ref{figure7} shows the effect of the variation in the frustration weights on the achieved expected material payoffs, of the soldier and the attacker, at PE and BU. In this regard, Fig. \ref{figure7} shows that, as the weight $\omega_a$ increases, the attacker's expected material payoff decreases, as the attacker becomes less apt to launching any attack. Fig. \ref{figure7} also shows that, as the weight $\omega_s$ increases, the soldier's  expected material payoff increased by up to $15.11\%$ as compared to the expected material payoff the soldier achieves in the conventional dynamic game (which corresponds to the PE expected material payoff of the soldier at $\omega_s=0$). This stems from the results shown in Fig. \ref{figure4} that at higher frustration weights, the attacker becomes less prone to launching attacks, which leads to an increase in the soldier's expected material payoff when the soldier's beliefs are error-free. Fig. \ref{figure7} also shows that potentially inaccurate beliefs, which occur after 10 iterations in the Bayesian updating algorithm, yield up to $9.12\%$ loss on the soldier's and the attacker's expected material payoffs.

\begin{figure}[!t]
  \begin{center}
    \includegraphics[width=8cm]{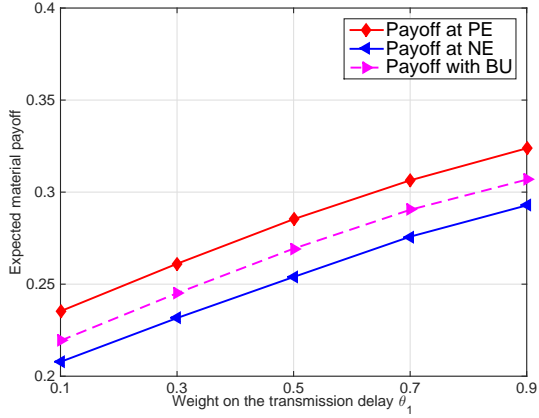}
    \caption{\label{figure8} The soldier's expected material payoff as the attacker's assigned weight, $\theta_1$, on maximizing the soldier's communication delay increases.}
  \end{center}\vspace{-0.5cm}
\end{figure}

Fig. \ref{figure8} highlights the variation in the soldier's and attacker's expected material payoffs at the equilibria as the weight, $\theta_1$, that the attacker assigns as part of its utility function to maximizing the soldier's communication delay varies. Fig. \ref{figure7} shows that, as the weight $\theta_1$ increases, the soldier's expected material payoff at the NE, PE and BU increases. Here, the BU solution is generated by running 10 iterations of the proposed Bayesian updating algorithm. In this respect, Fig. \ref{figure7} shows that, as the weight $\theta_1$ increases from $0.1$ to $0.9$, the soldier's expected material payoffs at NE and PE increase up to $37.56\%$ and $39.05\%$, respectively. This stems from the fact that the attacker's increased intention to maximize the soldier's communication delay causes an increased likelihood (at the equilibrium) that the attacker compromises the best IoBT device along the path.  This increased likelihood of attacking the IoBT device with the best channel leads to a decrease in the likelihood of attacking the remaining IoBT devices. With the error-free beliefs on the attacker, the soldier becomes more apt to connecting to the second best device along the path. As such, the soldier's expected material payoff still increases with the increased weight $\theta_1$. Meanwhile, at 10 iterations of the Bayesian updating algorithm, the resulting beliefs of each of the players would not have totally converged. Hence, at this point, the beliefs of each of the players are not completely error-free. This causes the chosen rational strategies not to completely align with the equilibrium strategies leading to a decrease of up to $9.23\%$ in the resulting expected material payoff of the soldier, as compared to its PE expected material payoff. Hence, similarly to Fig. \ref{figure7}, this result also highlights the effect that non-error-free beliefs have on the chosen strategies as well as on the outcome of the game.  

\begin{figure}[!t]
  \begin{center}
   \vspace{0cm}
    \includegraphics[width=8cm]{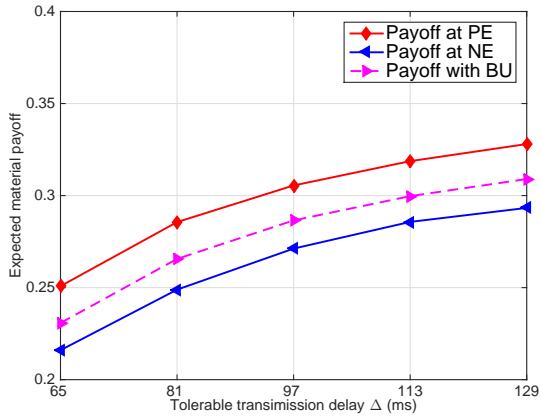}
    \caption{\label{figure9} The soldier's expected material payoff as its tolerable transmission delay varies.}
  \end{center}\vspace{-0.5cm}
\end{figure}
Fig. \ref{figure9} shows how the soldier and attacker's expected material payoffs at an equilibrium vary as the soldier's tolerable communication delay $\Delta$ increases. In this respect, Fig. \ref{figure9} shows that the soldier's expected material payoff at the NE, PE and BU increases with an increase in $\Delta$.  Fig. \ref{figure9} also shows that, as $\Delta$ increases, the soldier's expected material payoff at NE and PE increases up to $35.79\%$ and $38.11\%$, respectively, while the soldier's expected material payoff at the BU increases up to $36.88\%$. This is due to the fact that, as $\Delta$ increases from $65$ ms to $129$ ms, the difference between the normalized communication delay (i.e., $\frac{\tau}{\Delta}$) of the best channel and other channels decreases. Hence, the soldier becomes less likely, at the equilibrium, to connect to the IoBT device that has the best channel condition. As the probability that the attacker compromises the suboptimal IoBT devices decrease, the likelihood that the soldier evades the attacker's jamming attacker increases. Fig. \ref{figure9} also shows that, when the soldier and attacker choose optimal strategies based on inaccurate beliefs (e.g., after only 10 iterations of Bayesian updating), the soldier's expected material payoff will decrease by up to $8.70\%$, which is aligned with the results of Fig. \ref{figure7} and Fig. \ref{figure7}. 
\begin{figure}[!t]
  \begin{center}
   \vspace{0cm}
    \includegraphics[width=8cm]{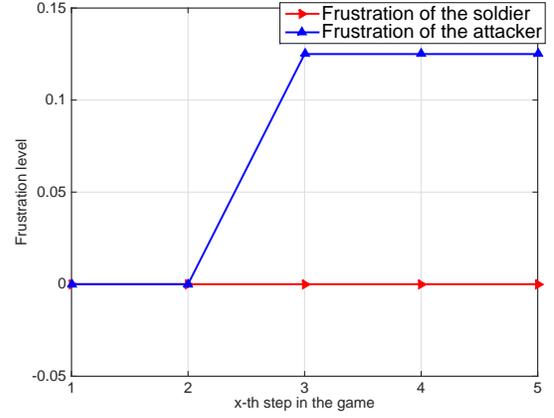}
    \caption{\label{figure10} Frustration of the soldier and the attacker at the different steps in the battlefield.}
  \end{center}\vspace{-0.3cm}
\end{figure}

Fig. \ref{figure10} shows the soldier's and attacker's frustration as the soldier progresses from one step to the other in the battlefield. At the used simulation parameters, at the third step of this mission, the attacker attacks the IoBT network, while the soldier does not connect yet to the network. After observing previous sequence of actions in the battlefield, the soldier will connect to the IoBT network at the fourth step. Fig. \ref{figure10} also shows that, as the soldier successfully connects to the IoBT network, without being jammed, the frustration level of the soldier stays at $0$. On the other hand, the frustration level of the attacker increases at step 3 since it has attacked a device to which the soldier has not communicated. Fig. \ref{figure10} shows that the frustration level of the attacker increases at step 3, and remains constant until the end of this mission.

\begin{figure}
  \centering
  \subfigure[Frustration of the soldier resulting from BU, as time elapses.]{
    \label{figure11a} 
    \includegraphics[width=8cm]{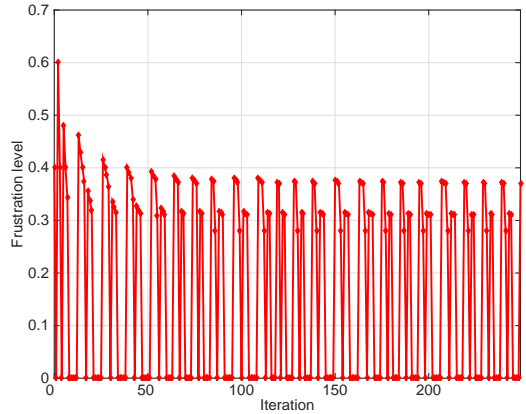}}
  \hspace{1in}
  \subfigure[Frustration of the attacker resulting from BU, as time elapses.]{
    \label{figure11b} 
    \includegraphics[width=8cm]{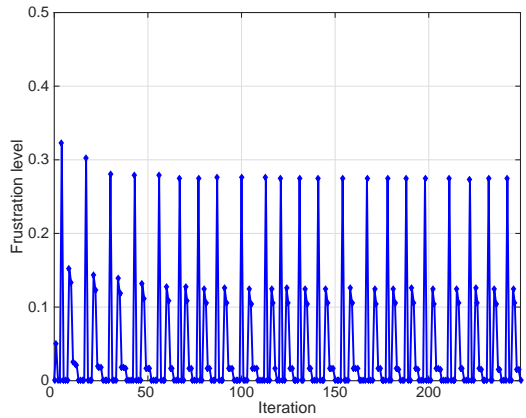}}
  \caption{Frustration of the soldier and the attacker resulting from BU, as time elapses.}
  \label{figure11} 
\end{figure}

Fig. \ref{figure11} shows the variation of the soldier's and attacker's frustration levels at different iterations of the Bayesian updating algorithm. In this regard, Fig. \ref{figure11} shows that, as time elapses, the players' frustration under each of their chosen strategies decreases, as the players' beliefs converge. This, as a result, highlights the effect of having inaccurate beliefs , not only on the expected material payoffs of each player, but also on their resulting frustration levels. 

 \vspace{0.2cm}
  \section{Conclusion}
  In this paper, we have considered an anti-jamming
problem in an IoBT network in which an adversary attempts
to interdict the connection between a soldier and IoBT devices
using jamming. We have formulated this problem as a dynamic
psychological game. Due to the reliance of the players' actions on
their beliefs, we have used the Bayesian updating to
solve this game. The psychological game enables the soldier to
determine its actions based on its estimation on the attacker's
behavior and belief. Simulation results have shown that, by
explicitly intending to frustrate its opponent, the soldier's and attacker's strategies will deviate from their strategies at a conventional, non-psychological NE. Simulation results have also shown that, using the proposed Bayesian updating algorithm, the soldier and the attacker update their beliefs toward their opponent and can reach $\epsilon$-like psychological self-confirming equilibrium (PSCE) strategies for our proposed psychological game 

  \bibliographystyle{IEEEbib}
\def\baselinestretch{0.98}
\bibliography{references}

\begin{thebibliography}{10}

\bibitem{suri2016analyzing}
N.~Suri, M.~Tortonesi, J.~Michaelis, P.~Budulas, G.~Benincasa, S.~Russell,
  C.~Stefanelli, and R.~Winkler,
\newblock ``Analyzing the applicability of internet of things to the
  battlefield environment,''
\newblock in {\em Proc. of Military Communications and Information Systems
  {(ICMCIS)}}, Brussels, Belgium, May. 2016.

\bibitem{tortonesi2016leveraging}
M.~Tortonesi, A.~Morelli, M.~Govoni, J.~Michaelis, N.~Suri, C.~Stefanelli, and
  S.~Russell,
\newblock ``Leveraging internet of things within the military network
  environment challenges and solutions,''
\newblock in {\em Proc. of IEEE World Forum on Internet of Things {(WF-IoT)}},
  Reston, USA, Dec 2016.

\bibitem{ray2015towards}
P.~P. Ray,
\newblock ``Towards an internet of things based architectural framework for
  defence,''
\newblock in {\em Proc. of International Conference on Control,
  Instrumentation,Communication and Computational Technologies (ICCICCT)},
  Kumaracoil, India, Dec 2015.

\bibitem{abuzainab2017dynamic}
N.~Abuzainab and W.~Saad,
\newblock ``Dynamic connectivity game for adversarial internet of battlefield
  things systems,''
\newblock {\em IEEE Internet of Things Journal, to appear}, 2018.

\bibitem{Nof2018misinfo}
N.~Abuzainab and W.~Saad,
\newblock ``Misinformation control in the internet of battlefield things: A
  multiclass mean-field game,''
\newblock in {\em IEEE Global Communications Conference (GLOBECOM)}, Abu Dhabi,
  UAE, December 2018.

\bibitem{hota2016fragility}
A.~R. Hota, S.~Garg, and S.~Sundaram,
\newblock ``Fragility of the commons under prospect-theoretic risk attitudes,''
\newblock {\em Games and Economic Behavior}, vol. 98, pp. 135--164, July 2016.

\bibitem{sanjab2017prospect}
A.~Sanjab, W.~Saad, and T.~Ba{\c{s}}ar,
\newblock ``Prospect theory for enhanced cyber-physical security of drone
  delivery systems: A network interdiction game,''
\newblock in {\em Proc. of IEEE International Conference on Communications
  {(ICC)}}, Paris, France, Jul 2017.

\bibitem{xiao2017cloud}
L.~Xiao, D.~Xu, C.~Xie, N.~B. Mandayam, and H.~V. Poor,
\newblock ``Cloud storage defense against advanced persistent threats: A
  prospect theoretic study,''
\newblock {\em IEEE Journal on Selected Areas in Communications}, vol. 35, no.
  3, pp. 534--544, March 2017.

\bibitem{sanjab2016bounded}
A.~Sanjab and W.~Saad,
\newblock ``On bounded rationality in cyber-physical systems security:
  Game-theoretic analysis with application to smart grid protection,''
\newblock in {\em Joint Workshop on Cyber- Physical Security and Resilience in
  Smart Grids (CPSR-SG)}, April 2016, pp. 1--6.

\bibitem{geanakoplos1989psychological}
J.~Geanakoplos, D.~Pearce, and E.~Stacchetti,
\newblock ``Psychological games and sequential rationality,''
\newblock {\em Games and economic Behavior}, vol. 1, no. 1, pp. 60--79, March
  1989.

\bibitem{battigalli2009dynamic}
P.~Battigalli and M.~Dufwenberg,
\newblock ``Dynamic psychological games,''
\newblock {\em Journal of Economic Theory}, vol. 144, no. 1, pp. 1--35, January
  2009.

\bibitem{battigalli2015frustration}
P.~Battigalli, M.~Dufwenberg, and A.~Smith,
\newblock ``Frustration and anger in games,''
\newblock {\em CESifo Working Paper Series}, Aug 2015.

\bibitem{rossi2017much}
G.~Rossi, A.~Tcheukam, and H.~Tembine,
\newblock ``How much does users' psychology matter in engineering
  mean-field-type games,''
\newblock {\em arXiv preprint arXiv:1702.05355}, 2017.

\bibitem{ye2018psychology}
Y.~Hu, N.~Abuzainab, and W.~Saad,
\newblock ``Dynamic psychological game for adversarial internet of battlefield
  things systems,''
\newblock in {\em Proc. of IEEE International Conference on Communications
  {(ICC)}}, Kansas, USA, May 2018.

\bibitem{yang2008achieving}
Y.~Yang and R.~Kravets,
\newblock ``Achieving delay guarantees in ad hoc networks by adapting ieee
  802.11 contention windows,''
\newblock in {\em Infocom}. IEEE, vol. 2006, pp. 1--8.

\bibitem{colman2003cooperation}
A.~M. Colman,
\newblock ``Cooperation, psychological game theory, and limitations of
  rationality in social interaction,''
\newblock {\em Behavioral and brain sciences}, vol. 26, no. 2, pp. 139--153,
  April 2003.

\bibitem{chen2016caching}
M.~Chen, M.~Mozaffari, W.~Saad, C.~Yin, M.~Debbah, and C.~S. Hong,
\newblock ``Caching in the sky: {P}roactive deployment of cache-enabled
  unmanned aerial vehicles for optimized quality-of-experience,''
\newblock {\em IEEE Journal on Selected Areas on Communications (JSAC), Special
  Issue on Human-In-The-Loop Mobile Networks}, vol. 35, no. 5, pp. 1046--1061,
  May 2017.

\bibitem{alpaydin2009introduction}
E.~Alpaydin,
\newblock {\em Introduction to machine learning},
\newblock MIT press, 2009.

\bibitem{jaffray1992bayesian}
J.Y. Jaffray,
\newblock ``Bayesian updating and belief functions,''
\newblock {\em IEEE transactions on systems, man, and cybernetics}, vol. 22,
  no. 5, pp. 1144--1152, 1992.

\bibitem{kalai1993rational}
E.~Kalai and E.~Lehrer,
\newblock ``Rational learning leads to nash equilibrium,''
\newblock {\em Econometrica: Journal of the Econometric Society}, vol. 61, no.
  5, pp. 1019--1045, September 1993.

\bibitem{fudenberg1995learning}
D.~Fudenberg and D.~M. Kreps,
\newblock ``Learning in extensive-form games i. self-confirming equilibria,''
\newblock {\em Games and Economic Behavior}, vol. 8, no. 1, pp. 20--55, January
  1995.

\bibitem{pal2011performance}
G.~Bianchi,
\newblock ``Performance analysis of ieee 802.11 distributed coordination
  function,''
\newblock {\em IEEE J. Select. Areas Commun}, vol. 18, no. 3, pp. 535--547,
  2000.

\end{thebibliography}

%
%

\end{document}